\newcommand{\EMPH}[1]{\emph{#1}}
\newtheorem{theorem}{Theorem}[section]
\newtheorem{lemma}[theorem]{Lemma}
\newtheorem{observation}[theorem]{Observation}
\newcommand{\OPT}{OPT}
\newcommand{\terminals}{$\{(s_{1}, t_{1}), \ldots, (s_{k}, t_{k})\}$}
\newcommand{\opts}{$\{\opt{1}, \opt{2}, \ldots, \opt{k}\}$}
\newcommand{\spath}[1]{P_{#1}}
\newcommand{\opt}[1]{Q_{#1}}
\newcommand{\region}[1]{R_{#1}}
\newcommand{\border}[2]{B_{#1,#2}}
\newcommand{\len}[1]{\ell(#1)}
\newcommand{\eric}{Colin~de Verdi{\`e}re}
\title{Towards single face shortest vertex-disjoint paths in undirected planar graphs}%\footnote{School of Electrical Engineering and Computer Science, Oregon State University}}
\author{Glencora Borradaile}
\author{Amir Nayyeri}
\author{Farzad Zafarani}
\affil{School of Electrical Engineering and Computer Science \\ Oregon State University\\
	\{glencora, nayyeria, zafaranf\}@eecs.oregonstate.edu}
\begin{document}
\maketitle

\begin{abstract}
Given $k$ pairs of terminals \terminals\ in a graph $G$, the min-sum $k$ vertex-disjoint paths problem is to find a collection $\{\opt{1}, \opt{2}, \ldots, \opt{k}\}$ of vertex-disjoint paths with minimum total length, where $\opt{i}$ is an $s_i$-to-$t_i$ path between $s_i$ and $t_i$.  We consider the problem in planar graphs, where little is known about computational tractability, even in restricted cases. Kobayashi and Sommer propose a polynomial-time algorithm for $k \le 3$ in undirected planar graphs assuming all terminals are adjacent to at most two faces.
\eric\ and Schrijver give a polynomial-time algorithm when all the sources are on the boundary of one face and all the sinks are on the boundary of another face and ask about the existence of a polynomial-time algorithm provided all terminals are on a common face.  

We make progress toward \eric\ and Schrijver's open question by giving an $O(kn^5)$ time algorithm for undirected planar graphs when \terminals\ are in counter-clockwise order on a common face.
\end{abstract}

\newpage
\section{Introduction}

Given $k$ pairs of terminals \terminals, the \EMPH{$k$ vertex-disjoint} paths problem asks for a set of $k$ disjoint paths \opts, in which $\opt{i}$ is a path between $s_i$ and $t_i$ for all $1\leq i\leq k$.
As a special case of the multi-commodity flow problem, computing vertex disjoint paths has found several applications, for example in VLSI design\cite{kramer1984complexity}, or network routing \cite{ogier1993distributed,srinivas2005finding}.  
It is one of Karp's NP-hard problems \cite{karp1974} even for undirected planar graphs if $k$ is part of the input \cite{middendorf1993disjPathComplexity}.  
However, there are polynomial time algorithms if $k$ is a constant for general undirected graphs~\cite{robertson1995graph, Kawarabayashi2010shorterProofGraphMinorAlg}.
In general directed graphs, the $k$-vertex-disjoint paths problem is NP-hard even for $k = 2$~\cite{fortune1980directedSubGraphHom} but is fixed parameter tractable with respect to parameter $k$ in directed planar graphs~\cite{schrijver1994finding, cygan2013planarDirDisjFPT}.

Surprisingly, much less is known for the optimization variant of the problem, \EMPH{minimum-sum $k$ vertex-disjoint paths problem ($k$-min-sum)}, where a set of disjoint paths with \emph{minimum total length} is desired. 
For example, the $2$-min-sum problem and the $4$-min-sum problem are open in directed and undirected planar graphs, respectively, even when the terminals are on a common face; neither polynomial-time algorithms nor hardness results are known for these problems~\cite{kobayashi2010shortest}. Bjorklund and Husfeldt gave a randomized polynomial time algorithm for the min-sum two vertex-disjoint paths problem in general undirected graphs \cite{bjorklund2014shortest}. Kobayashi and Sommer provide a comprehensive list of similar open problems (Table 2~\cite{kobayashi2010shortest}).

One of a few results in this context is due to \eric~and Schrijver~\cite{verdiere2011shortest}: a polynomial time algorithm for the $k$-min-sum problem in a (directed or undirected) planar graph, given all sources are on one face and all sinks are on another face~\cite{verdiere2011shortest}.  In the same paper, they ask about the existence of a polynomial time algorithm provided all the terminals (sources and sinks) are on a common face. If the sources and sinks are ordered so that they are in the order $s_1, s_2, \ldots, s_k, t_k, t_{k-1}, \ldots, t_1$ around the boundary, then the $k$-min-sum problem can be solved by finding a min-cost flow from $s_1, s_2, \ldots, s_k$ to $t_k, t_{k-1}, \ldots, t_1$.  For $k \le 3$ in undirected planar graphs with the terminals in arbitrary order around the common face, Kobayashi and Sommer give an $O(n^4\log n)$ algorithm~\cite{kobayashi2010shortest}\footnote{Kobayashi and Sommer also describe algorithms for the case where terminals are on two different faces, and $k = 3$.}.
In this paper, we give the first polynomial-time algorithm for an arbitrary number of terminals on the boundary of a common face, which we call $F$, so long as the terminals alternate along the boundary.  Formally, we prove:
\begin{theorem}
There exists an $O(kn^5)$ time algorithm to solve the $k$-min-sum problem, provided that the terminals $s_1, t_1, s_2, t_2, \ldots, s_k, t_k$ are in counter-clockwise order on the boundary of the graph.
\end{theorem}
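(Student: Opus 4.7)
The plan hinges on exploiting the topological structure imposed by the alternating cyclic order of the terminals on the common face $F$. Let $\alpha_i$ denote the arc of $\partial F$ from $s_i$ to $t_i$ in the CCW direction; because the terminals alternate, $\alpha_i$ contains no other terminals and the arcs $\alpha_1, \ldots, \alpha_k$ are pairwise disjoint. For any feasible solution $\{Q_1,\ldots,Q_k\}$, $\alpha_i \cup Q_i$ is a simple closed curve bounding a topological disk $D_i$ on the side not containing $F$. My first step is a structural lemma asserting that in any feasible solution the disks $D_1, \ldots, D_k$ are pairwise disjoint. Indeed, $Q_i$ and $Q_j$ are vertex-disjoint and hence do not cross in the planar embedding; combined with a standard boundary-preprocessing ensuring that paths meet $\partial F$ only at their endpoints, the endpoints of $Q_j$ lie on $\partial F \setminus \alpha_i$ and $Q_j$ never touches $\alpha_i$, so the entire curve $\alpha_j \cup Q_j$ lies outside $D_i$.

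Next, I would establish an optimality characterization via an exchange argument: in an optimal solution each $Q_i$ is a shortest $s_i$-to-$t_i$ path in the subgraph of $G$ obtained by deleting the vertices of the other $Q_j$'s together with the interiors of the disks $D_j$ for $j \ne i$. Replacing $Q_i$ by a strictly shorter such path preserves vertex-disjointness and the disk structure while decreasing total length, contradicting optimality. This characterization motivates a dynamic programming algorithm that processes the pairs in CCW order and extends a partial solution one pair at a time. The DP state captures the boundary information of the union $D_1 \cup \cdots \cup D_i$ already placed---intuitively, the ``frontier'' facing the remaining pairs---and the transition for pair $i+1$ chooses the optimal next path $Q_{i+1}$ via a shortest-path computation in a planar subgraph obtained by excising the current frontier. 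Aiming for a state space of size about $O(n^4)$ and $O(n)$-time transitions (using, for example, multiple-source shortest paths in planar graphs) yields the target $O(kn^5)$ running time.

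The main obstacle will be designing a DP state that is simultaneously compact and sufficient. A naive encoding of $\bigcup_{j \le i} D_j$ is exponentially large, so I anticipate having to prove that only a constant number of frontier vertices (or equivalently a single bounding path in $G$) determines the optimal extension. This likely requires a further exchange argument showing that the portion of the already-placed solution that is far from the next pair $(s_{i+1}, t_{i+1})$ can be re-routed locally without altering the global optimum; the alternating order of the terminals should be the essential hypothesis making this local re-routing always possible. Once this compact-state property is established, the remaining algorithmic work reduces to planar shortest-path queries inside dynamically defined subgraphs, which can be handled by standard planar graph machinery.
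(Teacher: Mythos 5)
There is a genuine gap here, and you have in fact named it yourself: the entire difficulty of the theorem is proving that a compact DP state suffices, and your proposal defers exactly that step to a ``further exchange argument'' that you do not supply. The frontier you propose to encode is $\bigcup_{j\le i} D_j$, where each disk $D_i$ is bounded by the \emph{unknown} optimal path $Q_i$; nothing in your argument bounds the description complexity of that frontier, and the local re-routing you hope for is not available in general (re-routing $Q_j$ near the pair $(s_{i+1},t_{i+1})$ can change its length and hence the optimum). The paper closes this gap with a structural lemma of a different character: each optimal $Q_i$ is confined to the region $R_i$ bounded by the \emph{precomputed} shortest $s_i$-to-$t_i$ path $P_i$ and the boundary arc (Lemma~\ref{lem:sp_in_region}, proved by an exchange using uniqueness of shortest paths and the Jordan curve theorem). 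This localizes all interaction between $Q_i$ and $Q_j$ to the border $\border{i}{j} = \region{i}\cap\region{j}$, which is itself a single shortest path known in advance, and there are only $O(k)$ such borders. The DP state then needs only a border and one vertex on each of the two adjacent $P$'s --- $O(kn^2)$ states total --- which is what makes the $O(kn^5)$ bound reachable. Without an a priori confinement of $Q_i$ to a precomputable region, your state space has no polynomial bound.

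Two further structural points are missing from your plan. First, the regions do not interact in a simple left-to-right chain, so a sweep ``one pair at a time in CCW order'' is not the right recursion; the paper shows the borders form a \emph{tree} (via the non-regions of $\mathbb{R}^2\setminus(F\cup\mathcal{R})$) and runs the DP over that tree, piecing together children's sub-solutions with an auxiliary DAG. Second, on a shared border $\border{i}{j}$ the two optimal paths may both run along the border and must be optimized \emph{jointly} (Lemma~\ref{lem:beta_opt}); this requires a $2$-min-sum/min-cost-flow subroutine, not independent shortest-path queries. Your characterization that each $Q_i$ is individually shortest given the others is true but not strong enough to drive any of these computations.
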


\paragraph{Definitions and assumptions} We use standard notation for graphs and planar graphs.  For simplicity, we assume that the terminal vertices are distinct.  One could imagine allowing $t_i = s_{i+1}$; our algorithm can be easily modified to handle this case.  We also assume that the shortest path between any two vertices of the input graph is unique as it significantly simplifies the presentation of our result; this assumption can be enforced using a perturbation technique~\cite{MVV87}.

\section{Preliminaries}

\paragraph{Walks and paths.} Let $G = (V, E)$ be a graph, and let $H$ be a subgraph of $G$.  We use $V(H)$ and $E(H)$ to denote the vertex set and the edge set of $H$, respectively.  For $U \subseteq V$, we use \EMPH{$G[U]$} to denote the subgraph of $G$ induced by $U$, whose vertex set is $U$ and whose edge set is all edges of $G$ with both endpoints in $U$.  A \EMPH{walk} $W = (v_{1}, v_{2}, \ldots, v_{h})$ in $G$ is a sequence of vertices such that for all $1\leq i < h$, we have $(v_i, v_{i+1}) \in E$.  For any $v_i, v_j \in W$, \EMPH{$W[v_i, v_j]$} is the (sub-)walk $(v_i, v_{i+1}, \ldots, v_j)$.
A \EMPH{path} is a walk with no repeated vertices.  Sometimes, we view a walk as its set of edges, and use set operations on walks.  For example, given two walks $W_1$ and $W_2$, we define $W = W_1\oplus W_2$ to be their symmetric difference when it is clear from the context that $W$ is a walk.  Given a length function $\ell:E\rightarrow V$, the length of $W$ is denoted by \EMPH{$\len{W}$}, and it is $\sum_{i=1}^{h-1}{\ell(v_i, v_{i+1})}$.

\paragraph{Planarity.}
An \EMPH{embedding} of a graph $G = (V,E)$ into the Euclidean plane is a mapping of vertices of $G$ into different points of $\mathbb{R}^2$, and edges of $G$ into internally disjoint simple curves in $\mathbb{R}^2$ such that the endpoints of the image of $(u,v) \in E$ are the images of vertices $u \in V$ and $v \in V$.  If such an embedding exists then $G$ is a \EMPH{planar graph}.  A \EMPH{plane graph} is a planar graph and an embedding of it.  The \EMPH{faces} of a plane graph $G$ are the maximal connected components of $\mathbb{R}^2$ that are disjoint from the image of $G$.  If $G$ is connected it contains only one unbounded face.  This is called the \EMPH{outer face} of $G$ and we denote the boundary of $G$ by $\partial G$.
Let $W$ be a walk and let $W^o$ be the set of edges that are used an odd number of times in $W$.  
$W^o$ is a collection of simple cycles.  For any point $p\in \mathbb{R}^2$, we say that $p$ is inside $W$ if $p$ is on the image of $W$ in the plane, or $c$ is inside an odd number of cycles of $W^o$.
When it is clear from the context, we use the same notation to refer to the subgraph composed of the vertices and edges whose images are completely inside $W$.

\section{Structural properties}
In this section, we present fundamental properties of the optimum solution that we exploit in our algorithm.
To simplify the exposition, we search for pairwise disjoint walks rather than simple paths and refer to a set of pairwise disjoint walks connecting corresponding pair of terminals as a \emph{feasible} solution.
Indeed, in an optimal solution, the walks are simple paths.

Let $\{\opt{1}, \opt{2}, \ldots, \opt{k}\}$ be an optimal solution, where $\opt{i}$ is a $s_i$-to-$t_i$ path and let $\{\spath{1}, \spath{2}, \ldots, \spath{k}\}$ be the set of shortest paths, where $\spath{i}$ is the $s_i$-to-$t_i$ shortest path.
These shortest paths together with the boundary of the graph, $\partial G$, define internally disjoint regions of the plane.  Specifically, we define $\region{i}$ to be the subset of the plane bounded by the cycle $\spath{i} \cup C_i$, where $C_i$ is the $s_i$-to-$t_i$ subpath of $\partial G$ that does not contain other terminal vertices.
The following lemmas constrain the behavior of the optimal paths.

\begin{lemma}
\label{lem:sp_in_region}
For all $1\leq i\leq k$, the path $\opt{i}$ is inside $\region{i}$.
\end{lemma}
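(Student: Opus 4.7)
I would prove this by contradiction via an exchange argument. Suppose $\opt{i} \not\subseteq \region{i}$ for some $i$; then $\opt{i}$ contains a maximal sub-walk $w$ whose interior lies strictly outside $\region{i}$, with endpoints $u, v$ on $\spath{i}$. By the uniqueness of shortest paths and subpath optimality, $\spath{i}[u,v]$ is the unique shortest $u$-to-$v$ path, so $\len{\spath{i}[u,v]} < \len{w}$. The natural move is to replace $w$ with $\spath{i}[u,v]$ inside $\opt{i}$: this strictly decreases the total length, so provided vertex-disjointness with the other $\opt{l}$'s is preserved, we would obtain a cheaper feasible solution, contradicting optimality.

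To control disjointness, I would pick the excursion carefully. Consider the collection of all \emph{bad excursions} across the solution: sub-walks $w'$ of some $\opt{j}$ with endpoints on $\spath{j}$ and interior strictly outside $\region{j}$. Each such $w'$ together with $\spath{j}[u', v']$ bounds a planar disk $D_{w'}$ on the side of $\spath{j}$ opposite the arc $C_j$. I would select a bad excursion $w$ (of some $\opt{i}$) whose disk $D$ is minimal under set-inclusion.

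Next I would argue that no vertex of any $\opt{l}$ with $l \neq i$ lies in the interior of $D$. If some $\opt{l}$ did enter $D$, its sub-walk $w'$ in $D$ would have endpoints on $\partial D = w \cup \spath{i}[u,v]$; vertex-disjointness from $\opt{i}$ forces the endpoints onto $\spath{i}[u,v]$. Using the alternating counter-clockwise order of terminals---each $\region{l}$ is anchored to a distinct arc $C_l \subseteq \partial G \setminus C_i$---together with the uniqueness of shortest paths to constrain $\spath{l} \cap \spath{i}$, one shows $w'$ lies outside $\region{l}$ and is therefore itself a bad excursion of $\opt{l}$, with disk strictly contained in $D$. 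This contradicts the minimality of $w$. Hence $D$ is free of other paths, the swap $w \mapsto \spath{i}[u,v]$ preserves vertex-disjointness and strictly reduces length, contradicting the optimality of $\{\opt{1}, \ldots, \opt{k}\}$.

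The main obstacle is the ``no intrusion'' claim: any $\opt{l}$ sub-walk inside $D$ is necessarily outside $\region{l}$ rather than sitting inside it. This requires a careful topological argument combining the alternating order of terminals with shortest-path uniqueness. The excursion disk $D$ is attached to $\spath{i}$ on the side opposite $C_i$ and is disjoint from each arc $C_l$ to which $\region{l}$ is anchored; a sub-walk of $\opt{l}$ in $D \cap \region{l}$ would force $\spath{l}$ to cross $\spath{i}$ in a pattern that contradicts the uniqueness of shortest paths. Making this picture precise---in particular handling the case where $\spath{i}$ and $\spath{l}$ share a connected sub-path---is the delicate step.
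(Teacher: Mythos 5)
Your core exchange is the same as the paper's: take a maximal sub-walk $w$ of $\opt{i}$ that leaves $\region{i}$, note its endpoints $u,v$ lie on $\spath{i}$, and swap $w$ for $\spath{i}[u,v]$, which is strictly shorter by uniqueness of shortest paths. Where you diverge is the disjointness step, and that is where your plan has a genuine gap. Your ``minimal bad excursion'' induction does not close as stated: a maximal sub-walk $w'$ of $\opt{l}$ inside the disk $D$ has its endpoints on $\spath{i}[u,v]$, \emph{not} on $\spath{l}$, so even if you could show $w'$ lies outside $\region{l}$, it is not itself a bad excursion of $\opt{l}$ under your own definition. The natural fix --- passing to the maximal sub-walk of $\opt{l}$ outside $\region{l}$ that contains $w'$ --- produces an excursion whose endpoints do lie on $\spath{l}$, but whose disk may exit $D$ and need not be contained in it, so inclusion-minimality of $D$ is not contradicted. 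On top of that, the claim you flag as ``delicate'' (that $D \cap \region{l}$ contains no part of $\opt{l}$, or more strongly that anything in $D$ is outside $\region{l}$) is left entirely unproved, and it is doing all the work.

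The paper avoids this machinery with a single observation that you should adopt: the repaired path $\opt{i}' = \opt{i} \oplus w \oplus \spath{i}[u,v]$ lies inside the closed region $A$ bounded by the Jordan curve $C_i \cup \opt{i}$ (the replacement segment $\spath{i}[u,v]$ sits in the disk $D$ bounded by $w$ and $\spath{i}[u,v]$, and $D \subseteq A$). Every terminal other than $s_i, t_i$ lies outside $A$, and any $\opt{j}$ entering $A$ would have to cross its boundary --- impossible across $C_i \subseteq \partial G$, and forbidden across $\opt{i}$ by the disjointness of the optimal solution. Hence $\opt{i}'$ is automatically disjoint from every other $\opt{j}$, no minimal-disk selection or induction over excursions is needed, and the contradiction with optimality is immediate. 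In short: your length argument is correct and identical to the paper's, but your route to disjointness is both incomplete and unnecessary; the Jordan-curve enclosure by $C_i \cup \opt{i}$ is the missing idea.
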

\begin{proof}
Suppose, to derive a contradiction, that $\opt{i} \nsubseteq \region{i}$.
So, there is a vertex $v\in \opt{i}\backslash \region{i}$.  Let $p_v$ be the maximal path containing $v$ that is internally disjoint from $\region{i}$.  Let $(x,y)$ be endpoints of $p_v$, and observe that $x,y \in \spath{i}$.  Also, by uniqueness of shortest paths, we have $\len{\spath{i}[x,y]} < \len{\opt{i}[x,y]}$.  Thus, $\opt{i}' = \opt{i} \oplus \opt{i}[x,y] \oplus \spath{i}[x,y]$ is shorter than $\opt{i}$.  

It remains to show that $\opt{i}' \cap \opt{j} = \emptyset$ for all $j \neq i$, $1\leq j\leq k$.  But, by the construction, $\opt{i}'$ is inside $C_i\cup \opt{i}$, and all terminals other than $s_i$ and $t_i$ are outside $C_i\cup \opt{i}$.  So, by Jordan curve theorem, for any $\opt{j}$ to intersect $\opt{i}'$, it has to intersect $\opt{i}$, too.  Thus, $\{\opt{1}, \ldots, \opt{k}\}\backslash \opt{i} \cup \opt{i}'$ is a shorter solution than the optimum, which is a contradiction.
\end{proof}

\noindent We take the vertices of $\spath{i}$ and $\opt{i}$ to be ordered along these paths from $s_i$ to $t_i$.

\begin{lemma}
\label{lem:pqSameOrder}
For $u,v \in \opt{i} \cap \spath{i}$, $u$ precedes $v$ in $\spath{i}$ if and only if $u$ precedes $v$ in $\opt{i}$.
\end{lemma}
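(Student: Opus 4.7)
The plan is to argue by contradiction via a minimal counterexample and a local rerouting. Suppose there exist $u, v \in \opt{i} \cap \spath{i}$ with $u$ preceding $v$ on $\spath{i}$ but $v$ preceding $u$ on $\opt{i}$, and among all such inverted pairs choose one that minimizes $\len{\opt{i}[v, u]}$.  The key construction is the $s_i$-to-$t_i$ walk
\[
\opt{i}' \;=\; \opt{i}[s_i, v] \cdot \spath{i}[u, v]^{-1} \cdot \opt{i}[u, t_i],
\]
obtained from $\opt{i}$ by replacing its $v$-to-$u$ subwalk with the reverse of $\spath{i}[u, v]$.

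The first step is a length comparison.  Since $\spath{i}[u, v]$ is a subpath of the unique shortest $s_i$-to-$t_i$ path, it is itself the unique shortest $u$-to-$v$ path, so $\len{\spath{i}[u, v]} \le \len{\opt{i}[v, u]}$, with equality exactly when the two walks coincide as edge sets.  If they coincide, every internal vertex $w$ of $\spath{i}[u, v]$ lies strictly between $v$ and $u$ on $\opt{i}$, giving a smaller inverted pair $(u, w)$ with $\opt{i}[w, u] \subsetneq \opt{i}[v, u]$ and contradicting minimality.  Hence, outside the degenerate single-edge case discussed below, $\len{\opt{i}'} < \len{\opt{i}}$.

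Next I would verify that $\opt{i}'$ gives a legal replacement for $\opt{i}$.  The prefix $\opt{i}[s_i, v]$ and the suffix $\opt{i}[u, t_i]$ are disjoint because $\opt{i}$ is a simple path with $v$ before $u$, and minimality implies $\spath{i}[u, v]$ is internally disjoint from $\opt{i}[v, u]$; any remaining re-visits of vertices of $\spath{i}[u, v]$ by $\opt{i}[s_i, v] \cup \opt{i}[u, t_i]$ can be removed by standard shortcutting without increasing length.  Vertex-disjointness from each other $\opt{j}$ follows by essentially the same topological argument as in the proof of Lemma~\ref{lem:sp_in_region}: since $\opt{i} \subseteq \region{i}$ and $\spath{i} \subseteq \partial \region{i}$, we have $\opt{i}' \subseteq \region{i}$; all terminals other than $s_i, t_i$ lie outside the Jordan curve $C_i \cup \opt{i}$, so any $\opt{j}$ meeting $\opt{i}'$ would have to cross this curve and hence meet $\opt{i}$ itself, contradicting the assumed disjointness of the optimum.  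Replacing $\opt{i}$ by the strictly shorter simple path extracted from $\opt{i}'$ then contradicts optimality.

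The main obstacle I expect is the degenerate case where $\spath{i}[u, v]$ is the single edge $(u, v)$ (which $\opt{i}$ traverses in reverse between $v$ and $u$), since there the swap yields $\opt{i}' = \opt{i}$ and gives no improvement.  I plan to handle this by considering instead the two hybrid walks $A = \opt{i}[s_i, v] \cdot \spath{i}[v, t_i]$ and $B = \spath{i}[s_i, u] \cdot \opt{i}[u, t_i]$.  A short length accounting using that $\spath{i}$ passes consecutively through $u$ and $v$ yields $\len{A} + \len{B} = \len{\opt{i}} + \len{\spath{i}} - 2\,\ell(u, v)$, so since $\len{\spath{i}} \le \len{\opt{i}}$ and $\ell(u, v) > 0$, at least one of $A$ or $B$ is strictly shorter than $\opt{i}$.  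Extracting a simple $s_i$-to-$t_i$ subpath from the shorter one and again applying the Jordan curve argument of Lemma~\ref{lem:sp_in_region} to rule out collisions with the remaining $\opt{j}$'s should produce the required contradiction.
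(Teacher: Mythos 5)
Your approach is genuinely different from the paper's: you run an exchange argument (replace $\opt{i}[v,u]$ by $\spath{i}[u,v]$ and contradict optimality by length), whereas the paper's proof is purely topological and needs no length comparison at all --- it shows that an inverted pair forces $\opt{i}$ to intersect itself. Concretely, the paper decomposes $\opt{i}$ into $\gamma_1 = \opt{i}[s_i,v]$, $\gamma_2 = \opt{i}[v,u]$, $\gamma_3 = \opt{i}[u,t_i]$ and observes that, since $\opt{i} \subseteq \region{i}$ (Lemma~\ref{lem:sp_in_region}) and $u$ lies on $\spath{i}[s_i,v]$, the chord $\gamma_1$ separates $u$ from $t_i$ inside the disk $\region{i}$, so $\gamma_2$ or $\gamma_3$ must hit $\gamma_1$, contradicting simplicity. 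That argument applies to \emph{any} simple $s_i$-to-$t_i$ path contained in $\region{i}$ and never has to touch the other paths $\opt{j}$.

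Your proof has a genuine gap precisely at the point where it does have to touch them: the disjointness of $\opt{i}'$ from the other $\opt{j}$. You assert that $\opt{i}'$ is inside the Jordan curve $C_i \cup \opt{i}$, so any $\opt{j}$ meeting it must cross $\opt{i}$. That containment is what makes the analogous step work in Lemma~\ref{lem:sp_in_region}, but there the replaced piece of $\opt{i}$ lies \emph{outside} $\region{i}$ and the substituted piece $\spath{i}[x,y]$ is squeezed between it and $C_i$. Here the geometry is reversed: $\opt{i}[v,u]$ lies inside $\region{i}$ and the substituted segment $\spath{i}[u,v]$ lies on $\partial\region{i}$, on the far side of $\opt{i}[v,u]$ from $C_i$ --- i.e., \emph{outside} the curve $C_i \cup \opt{i}$. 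Since $\spath{i}$ may contain a border $\border{i}{j}$, a neighboring path $\opt{j} \subseteq \region{j}$ is perfectly entitled to touch vertices of $\spath{i}[u,v]$ from the other side without ever crossing $\opt{i}$ (indeed, Type~II subpaths in the paper run along exactly such borders), so your rerouted solution need not be feasible and the contradiction with optimality is not reached. The same problem hits your degenerate single-edge case even harder, since $A$ and $B$ travel along the long stretches $\spath{i}[v,t_i]$ and $\spath{i}[s_i,u]$ of the region boundary. I do not see how to close this hole without essentially invoking the paper's topological argument, at which point the exchange machinery is unnecessary.
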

\begin{proof}
Suppose, to derive a contradiction, that $u$ and $v$ have different orders on $\spath{i}[s_i, t_i]$ and $\opt{i}[s_i, t_i]$, and assume, without loss of generality, that $u$ precedes $v$ in $\spath{i}[s_i, t_i]$, but $v$ precedes $u$ in $\opt{i}[s_i, t_i]$.  So, $\opt{i}[s_i, t_i]$ can be decomposed into three subpaths (1) $\gamma_1$ is a $s_i$-to-$v$ path, (2) $\gamma_2$ is a $v$-to-$u$ path, and (3) $\gamma_3$ is a $u$-to-$t_i$ path.  If $\gamma_1$ contains $u$ then $\opt{i}$ is not a simple path, visiting $u$ at least twice. Otherwise, The Jordan curve theorem implies that $\gamma_2$ or $\gamma_3$ must intersect $\gamma_1$.  Again $\opt{i}$ is not simple, so, it is not a walk in the optimum solution.
\end{proof}

\begin{figure}[tbh]
  \centering
    \includegraphics[height=1.2in]{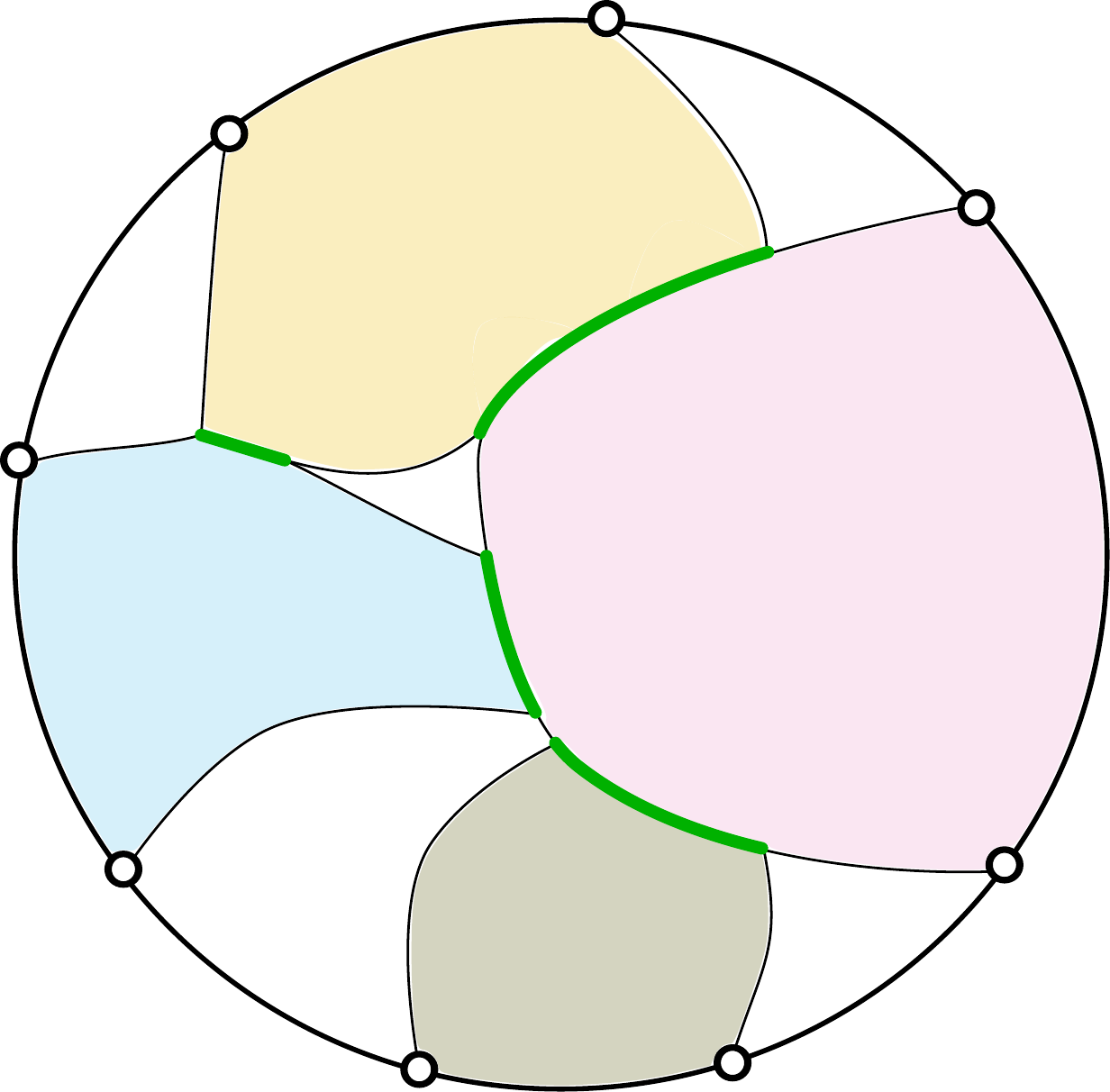}\qquad\qquad
    \includegraphics[height=1.2in]{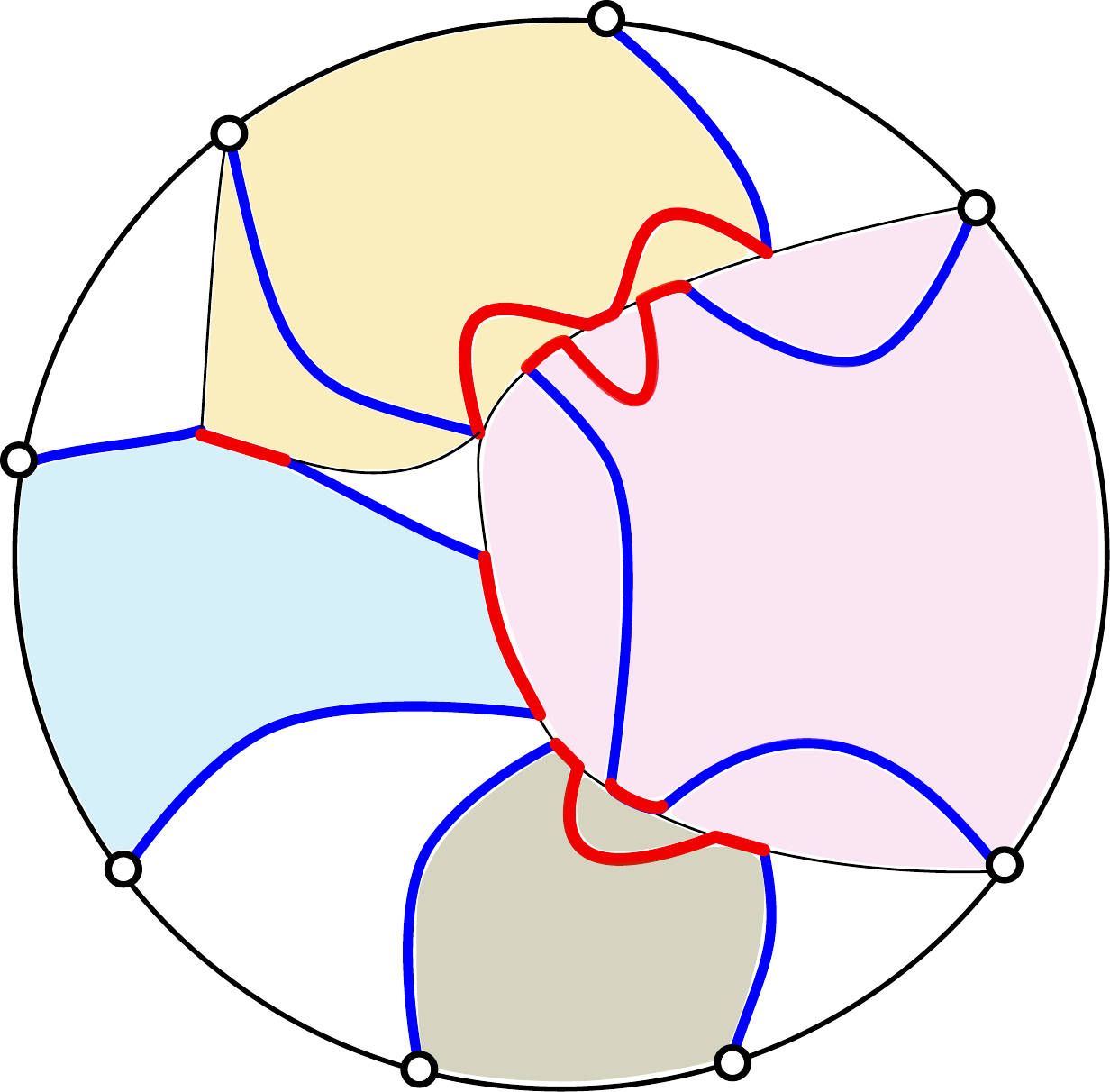}
      \caption{(left) A $4$-min sum instance; regions are shaded and borders are green. (right) A feasible solution; Type~I and Type~II subpaths are blue and red, respectively.}
  \label{fig:4-Disj}
\end{figure}

We call $\region{i} \cap \region{j}$ the \emph{border} of $\region{i}$ and $\region{j}$ and denote it $\border{i}{j}$.   Note that a border can be a single vertex. Since we assume shortest paths are unique, $\border{i}{j}$ is a single (shortest) path.
Figure~\ref{fig:4-Disj} illustrates borders for a $4$-min-sum instance.  The following lemma bounds the total number of borders.

\begin{lemma}
\label{lem:bounded_number_borders}
There are $O(k)$ border paths. 
\end{lemma}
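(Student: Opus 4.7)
My plan is to construct a planar, in fact outerplanar, graph $H$ on the vertex set $\{R_1, \dots, R_k\}$ whose edges are exactly the pairs $\{i,j\}$ with $B_{i,j} \neq \emptyset$. Since a simple outerplanar graph has at most $2|V|-3$ edges, this immediately bounds the number of borders by $O(k)$.

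The embedding rests on two structural facts I will prove first, both exploiting uniqueness of shortest paths. (i) For any $i \neq j$, the border $B_{i,j}$ coincides with the single (possibly trivial) subpath $P_i \cap P_j$, and near this subpath the regions $R_i$ and $R_j$ lie on opposite sides. Uniqueness is used in the now-standard way: if $P_i$ strayed into the interior of $R_j$, it would have to re-exit through $P_j$ at some pair $u,v$, and by uniqueness the detoured piece of $P_i$ would equal $P_j[u,v]$, contradicting the excursion. Uniqueness applied again shows $P_i \cap P_j$ is a single subpath. A symmetric local argument excludes $R_i$ and $R_j$ from lying on the same side of their shared subpath: otherwise $R_i \cap R_j$ would have $2$-dimensional area instead of being a path, as $C_i$ and $C_j$ are disjoint arcs of $\partial G$. (ii) Distinct borders are pairwise edge-disjoint: if an edge $e$ lay on the boundary of three different regions, then since a plane curve has only two local sides, two of those regions would sit on the same side of $e$, contradicting (i).

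With (i) and (ii) in hand, I draw $H$ inside the disk bounded by $\partial G$ as follows. For each $i$, place $x_i$ on the arc $C_i \subset \partial G$; the $k$ vertices then lie in cyclic order on the outer face. For each non-empty border $B_{i,j}$, pick an interior point $y_{ij}$ on an edge of $B_{i,j}$ and draw the edge $\{i,j\}$ of $H$ as a curve $x_i \to y_{ij} \to x_j$ confined to $R_i \cup R_j$. Because each $R_i$ is a topological disk (a planar region bounded by the simple closed curve $P_i \cup C_i$) and the curves inside it emanate from a single point $x_i$ toward distinct points on $\partial R_i$, they can be routed disjointly in tree-like fashion; edge-disjointness of borders guaranteed by (ii) prevents crossings at the chosen $y_{ij}$. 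This gives an outerplanar drawing of $H$, and the counting conclusion follows.

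The main obstacle I anticipate is the degenerate case, explicitly allowed by the lemma, in which several borders collapse to the same single vertex $v$; the $y_{ij}$'s then coincide and the drawn curves would appear to meet at $v$. I plan to handle this by appealing to the cyclic order of half-edges around $v$ inherited from the planar embedding of $G$: each $P_i$ passing through $v$ occupies two half-edges, and the assignment of $R_i$ to one of the two "sides" of $P_i$ at $v$ is consistent with this rotation. That combinatorial rotation partitions the paths through $v$ into non-crossing beams, which lets me re-route the border edges of $H$ locally around $v$ without introducing crossings, preserving outerplanarity.
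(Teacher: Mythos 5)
Your proof is correct and takes essentially the same route as the paper: both construct the region-adjacency graph on $k$ vertices with one edge per nonempty border and bound its edge count by planarity (the paper realizes this graph as a subgraph of the planar dual of $G[\partial G \cup \spath{1} \cup \cdots \cup \spath{k}]$, whereas you embed it directly). Your extra observations---that the graph is in fact outerplanar because every region touches $\partial G$, and the explicit handling of borders degenerating to a single shared vertex---only sharpen the constant and make the embedding more rigorous; the key idea is identical.
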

\begin{proof}
Let $\mathcal{R} = (V_{\mathcal{R}}, E_{\mathcal{R}})$ be the graph whose vertices correspond to regions in $G$, and there is an edge between two vertices of $V_{\mathcal{R}}$ if their corresponding regions in $G$ share a border.  Let $H = G[\partial F \cup \spath{1} \cup \ldots \cup \spath{k}]$, and observe that $\mathcal{R}$ is a subgraph of the planar dual of $H$.  Thus, $\mathcal{R}$ is planar.

Since there is a bijection between $V_{\mathcal{R}}$ and the set of regions of $G$, we have $|V_{\mathcal{R}}| = k$.
Additionally, there is a bijection between $E_{\mathcal{R}}$ and borders in $G$.  Because, $\mathcal{R}$ is planar we conclude that the number of borders in $G$ is $|E_{\mathcal{R}}| = O(V_{\mathcal{R}}) = O(k)$.
\end{proof}

Consider a region $\region{i}$ and consider the borders along $P_i$, $\border{i}{i_1}, \border{i}{i_2}, \ldots, \border{i}{i_t}$.  Observe that the intersections of the regions $R_{i_1}, R_{i_2}, \ldots, R_{i_t}$ with $\partial G$ must be in a {\em clockwise} order.  Let $\iota_1, \ldots, \iota_\ell$ be the subsequence of $i_1, \ldots, i_t$ of indices to regions that intersect $\opt{i}$.  For $j \in \{\iota_1, \ldots, \iota_\ell\}$, let $x_j$ and $y_j$ be the first and last vertex of $\opt{i}$ in $\border{i}{j}$.  Additionally, define $y_0 = s_i$ and $x_{\ell+1} = t_i$. We partition $\opt{i}$ into a collection of subpaths of two types as follows.
\begin{enumerate}[Type~I :]
\item For $h = 0, \ldots, \ell$, $\opt{i}[y_h, x_{h+1}]$ is a {\em Type~I} subpath in region $\region{i}$.
\item For $h = 1,\ldots, \ell-1$, $\opt{i}[x_h, y_{h}]$ is a {\em Type~II} subpath in region $\region{i}$.  We say that $\opt{i}[x_h, y_{h}]$ is on the border $\border{i}{j}$ containing $x_h$ and $y_{h}$.
\end{enumerate}
By this definition, all Type~I paths are internally disjoint from all borders. By Lemma~\ref{lem:pqSameOrder}, each Type~II path is internally disjoint from all borders except possibly the border that contains its endpoints, with which it may have several intersecting points.  See Figure~\ref{fig:4-Disj} for an illustration of Type~I~and~II paths.

The following lemma demonstrates a key property of Type~I paths, implying that (given their endpoints) they can be computed efficiently via a shortest path computation:
\begin{lemma} 
\label{lem:alpha_opt}
Let $\alpha$ be a Type~I subpath in region $\region{i}$.  
Then $\alpha$ is the shortest path between its endpoints in $\region{i}$ that is internally disjoint from all borders.
\end{lemma}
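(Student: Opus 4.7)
The plan is a standard exchange argument. I would assume for contradiction that $\alpha$ is not the shortest such path, so there exists a path $\alpha'$ with the same endpoints as $\alpha$, lying in $\region{i}$, internally disjoint from all borders, and satisfying $\len{\alpha'}<\len{\alpha}$. I would then form $\opt{i}' := (\opt{i}\setminus \alpha)\cup \alpha'$, which is an $s_i$-to-$t_i$ walk of length $\len{\opt{i}}-\len{\alpha}+\len{\alpha'}<\len{\opt{i}}$. If I can show that $\{\opt{1},\ldots,\opt{i-1},\opt{i}',\opt{i+1},\ldots,\opt{k}\}$ remains a pairwise vertex-disjoint family of walks, this contradicts the optimality of $\{\opt{1},\ldots,\opt{k}\}$.

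The substantive step is verifying vertex-disjointness of $\opt{i}'$ from each $\opt{j}$ with $j\neq i$. Since the other walks are untouched and $\opt{i}\cap\opt{j}=\emptyset$ in the optimum, only vertices of $\alpha'$ pose new potential conflicts, so I would split into internal vertices and endpoints. For an internal vertex $v$ of $\alpha'$: by Lemma~\ref{lem:sp_in_region}, $\opt{j}\subseteq \region{j}$, so any intersection of $\alpha'\subseteq \region{i}$ with $\opt{j}$ must lie in $\region{i}\cap \region{j}=\border{i}{j}$; but $\alpha'$ is internally disjoint from this border by hypothesis, so $v\notin \opt{j}$. For an endpoint $v$ of $\alpha'$: it coincides with an endpoint of $\alpha$, hence it is a vertex of $\opt{i}$, and therefore it cannot belong to $\opt{j}$ by the original vertex-disjointness of the optimal solution.

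I expect the only real subtleties to be peripheral: the Type~I endpoints $y_0=s_i$ and $x_{\ell+1}=t_i$ sit on $\partial G$ rather than on a border, but the endpoint argument still applies since $s_i,t_i\in \opt{i}$; and although $\opt{i}'$ may fail to be a simple path, the excerpt explicitly allows feasible solutions to be pairwise disjoint walks (and in any case a simple $s_i$-to-$t_i$ subpath of no greater length can be extracted while preserving disjointness). Uniqueness of $\alpha$ as the constrained shortest path then follows from the paper's global uniqueness-of-shortest-paths assumption together with the strictness $\len{\alpha'}<\len{\alpha}$ used to obtain the contradiction.
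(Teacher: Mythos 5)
Your proposal is correct and follows essentially the same exchange argument as the paper: replace $\alpha$ by a shorter competitor $\alpha'$ inside $\opt{i}$, use Lemma~\ref{lem:sp_in_region} to confine each $\opt{j}$ to $\region{j}$ so that the interior of $\alpha'$ (being in $\region{i}$ and border-disjoint) cannot meet any other $\opt{j}$, and contradict optimality. Your treatment of the endpoints and of the walk-versus-path issue matches the paper's (implicit) handling, so there is nothing further to add.
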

\begin{proof}
Let $u,v\in V(G)$ be the endpoints of $\alpha$, and let $\alpha'$ be the shortest $u$-to-$v$ path in $\region{i}$ that is internally disjoint from all borders.
Also, let $\opt{i}' = \opt{i} \oplus \alpha \oplus \alpha'$.  
The path $\alpha'$ has the same endpoints as $\alpha$, and it is internally disjoint from all $\region{j}$ if $j \neq i$.
Also, by Lemma~\ref{lem:sp_in_region}, for each $1\leq j\leq k$, $\opt{j}$ is inside $\region{j}$.
Therefore, $\alpha'$ is disjoint from $\opt{j}$ for all $1\leq j\leq k$ and $j\neq i$.
Thus, $\{\opt{1}, \ldots, \opt{k}\}\backslash \opt{i} \cup \opt{i}'$ is a set of $k$ pairwise disjoint walks, with total length $\OPT - \len{\alpha} + \len{\alpha'}$ where $\OPT$ is the total length of the optimal paths.
So, $\OPT - \len{\alpha} + \len{\alpha'} \geq \OPT$, which implies $\len{\alpha'}  \geq \len{\alpha}$.
Therefore $\alpha$ must be a shortest $(u,v)$ path in $\region{i}$ that avoids boundaries.  In fact, uniqueness of shortest paths implies $\alpha = \alpha'$.
\end{proof}

A Type~II path has a similar property if it is the only Type~II path on the border that contains its endpoints. The proof of the following lemma is almost exactly the same as the previous proof.
\begin{lemma} 
\label{lem:beta_opt0}
Let $\beta$ be a Type~II subpath in region $\region{i}$ on border $\border{i}{j}$.  Suppose there is no Type~II path on $\border{i}{j}$ inside $\region{j}$.
Then $\beta$ is the subpath of $\border{i}{j}$ between its endpoints.  %the shortest path between its endpoints in $\region{i}$ that is internally disjoint from all borders except $\border{i}{j}$.
\end{lemma}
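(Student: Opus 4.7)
The plan is to mirror the swap argument from Lemma~\ref{lem:alpha_opt} almost verbatim. Let $u, v$ be the endpoints of $\beta$, and set $\beta' = \border{i}{j}[u,v]$, the subpath of the border between them. Since $\border{i}{j}$ is itself a (single) shortest path by the uniqueness assumption, $\beta'$ is the unique shortest $u$-to-$v$ walk in $G$; in particular $\len{\beta'} \leq \len{\beta}$.

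Next I would form $\opt{i}' = \opt{i} \oplus \beta \oplus \beta'$ and show that replacing $\opt{i}$ by $\opt{i}'$ yields a feasible set of pairwise disjoint walks. Since $\opt{i}' \subseteq \opt{i} \cup \beta'$ and $\opt{i}$ is already disjoint from each $\opt{\ell}$ with $\ell \neq i$, it is enough to verify that $\beta' \cap \opt{\ell} = \emptyset$ for every such $\ell$. For $\ell = j$, the hypothesis is used directly: any vertex of $\opt{j}$ lying on $\border{i}{j} = \border{j}{i}$ would, by the Type~II decomposition of $\opt{j}$ in $\region{j}$, yield a (possibly degenerate) Type~II subpath on $\border{i}{j}$, contradicting the assumption. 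Hence $\opt{j} \cap \border{i}{j} = \emptyset$ and in particular $\opt{j} \cap \beta' = \emptyset$. For $\ell \notin \{i, j\}$, Lemma~\ref{lem:sp_in_region} places $\opt{\ell}$ inside $\region{\ell}$, which is interior-disjoint from $\region{i}$; since $\beta' \subseteq \border{i}{j} \subseteq \partial \region{i}$, any candidate shared vertex would have to lie on $\border{i}{j} \cap \border{i}{\ell}$, and one rules such coincidences out by the same Jordan-curve reasoning as in Lemma~\ref{lem:alpha_opt}, combined with the pre-existing disjointness of $\opt{i}$ and $\opt{\ell}$.

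With feasibility established, the total length of the modified collection is $\OPT - \len{\beta} + \len{\beta'}$, so optimality of $\OPT$ gives $\len{\beta'} \geq \len{\beta}$. Combined with $\len{\beta'} \leq \len{\beta}$ from the first paragraph, equality holds, and uniqueness of shortest paths collapses $\beta = \beta' = \border{i}{j}[u,v]$, as claimed.

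The step I expect to require the most care is the disjointness check for $\ell \notin \{i, j\}$: unlike in Lemma~\ref{lem:alpha_opt}, where $\alpha'$ was chosen to avoid all borders by construction, here $\beta'$ lives on the border $\border{i}{j}$ itself, so one must genuinely exclude the possibility that $\border{i}{j}$ and some $\border{i}{\ell}$ share a vertex that lies on $\opt{\ell}$. I anticipate that the planar structure of the region graph from Lemma~\ref{lem:bounded_number_borders} together with the ordering property of Lemma~\ref{lem:pqSameOrder} will rule these configurations out, but this is the one place where the argument is not a word-for-word copy of the previous proof.
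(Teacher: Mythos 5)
Your proposal is correct and matches the paper's approach exactly: the paper gives no explicit proof of this lemma, stating only that it is ``almost exactly the same as the previous proof,'' i.e., the swap-and-compare-lengths argument for Lemma~\ref{lem:alpha_opt} that you carry out with $\beta' = \border{i}{j}[u,v]$. If anything you are more careful than the paper, since you explicitly isolate the two places where the argument is not a verbatim copy --- using the hypothesis to get $\opt{j} \cap \border{i}{j} = \emptyset$, and handling possible shared vertices of $\border{i}{j}$ with borders of other regions.
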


The following lemma reveals a relatively more sophisticated structural property of Type~II paths on shared borders.

\begin{lemma}
\label{lem:beta_opt}
Let $\beta$ and $\gamma$ be  Type~II subpaths in $\region{i}$ and $\region{j}$ on $\border{i}{j}$, respectively, let $x_i$ and $y_i$ be the endpoints of $\beta$, and let $x_j$ and $y_j$ be the endpoints of $\gamma$.
Then, $\{\beta, \gamma\}$ is the pair of paths with \emph{minimum total length} with the following properties:
\begin{enumerate}[(1)]
\item $\beta$ is an $x_i$-to-$y_i$ path inside $\region{i}$, and it is internally disjoint from all borders except possibly $\border{i}{j}[x_i, y_i]$.
\item $\gamma$ is an $x_j$-to-$y_j$ path inside $\region{j}$, and it is internally disjoint from all borders except possibly $\border{i}{j}[x_j, y_j]$.  %\anote{The empty case might be written more nicely}
\end{enumerate}
\end{lemma}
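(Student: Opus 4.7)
The plan is to imitate the strategy of Lemma~\ref{lem:alpha_opt}, with an additional uncrossing step to handle possible intersections along the shared border $\border{i}{j}$.

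Suppose, for contradiction, that there is a pair $(\beta', \gamma')$ satisfying~(1) and~(2) with $\len{\beta'} + \len{\gamma'} < \len{\beta} + \len{\gamma}$.  Define
\[
\opt{i}' := \opt{i} \oplus \beta \oplus \beta', \qquad \opt{j}' := \opt{j} \oplus \gamma \oplus \gamma'.
\]
Each is a walk between the correct terminals, and their total length decreases by $\len{\beta} + \len{\gamma} - \len{\beta'} - \len{\gamma'} > 0$.  The goal is to show that $\{\opt{1}, \ldots, \opt{k}\} \setminus \{\opt{i}, \opt{j}\} \cup \{\opt{i}', \opt{j}'\}$ can be converted, without further length increase, into a feasible solution, contradicting the optimality of $\OPT$.

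The first step is to verify that $\opt{i}'$ and $\opt{j}'$ are vertex-disjoint from every $\opt{\ell}$ with $\ell \notin \{i, j\}$.  The unchanged pieces $\opt{i} \setminus \beta$ and $\opt{j} \setminus \gamma$ already satisfy this by the original optimality of $\{\opt{1}, \ldots, \opt{k}\}$.  For the new piece $\beta'$: Lemma~\ref{lem:sp_in_region} places $\opt{\ell}$ inside $\region{\ell}$, which meets $\region{i}$ only along the border $\border{i}{\ell}$ (if that border exists), while condition~(1) forbids $\beta'$ from touching any border other than $\border{i}{j}$.  The analogous argument handles $\gamma'$.

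The hard part is to make $\opt{i}'$ and $\opt{j}'$ vertex-disjoint from each other.  Since $x_i, y_i$ are the first and last vertices of $\opt{i}$ on $\border{i}{j}$, the unchanged piece $\opt{i} \setminus \beta$ avoids $\border{i}{j}$ entirely, and symmetrically for $\opt{j} \setminus \gamma$; thus any common vertex of $\opt{i}'$ and $\opt{j}'$ lies in $\beta' \cap \gamma' \subseteq \border{i}{j}$.  When $\beta' \cap \gamma' = \emptyset$ (in particular when $\border{i}{j}[x_i, y_i]$ and $\border{i}{j}[x_j, y_j]$ are disjoint subpaths), we are done immediately.  Otherwise, the next step is to uncross $\opt{i}'$ and $\opt{j}'$ at each common vertex $v \in \border{i}{j}$: because $\beta'$ remains in $\region{i}$ and $\gamma'$ in $\region{j}$, the two walks touch at $v$ without crossing transversally, so the four half-walks meeting at $v$ can be re-glued into an $s_i$-to-$t_i$ walk and an $s_j$-to-$t_j$ walk with the same total edge length and strictly fewer shared vertices.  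Planarity and the counter-clockwise order $s_1, t_1, \ldots, s_k, t_k$ around $\partial G$ force the re-gluing to respect the terminal pairing rather than swap to $(s_i, t_j), (s_j, t_i)$.  Iterating yields vertex-disjoint walks $W_i, W_j$ with $\len{W_i} + \len{W_j} = \len{\opt{i}'} + \len{\opt{j}'}$, and $\{\opt{1}, \ldots, \opt{k}\} \setminus \{\opt{i}, \opt{j}\} \cup \{W_i, W_j\}$ is a feasible solution of total length strictly less than $\OPT$, contradicting optimality.
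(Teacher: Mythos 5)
Your overall exchange strategy matches the paper's: substitute a cheaper competitor pair $(\beta',\gamma')$ into $\opt{i}$ and $\opt{j}$ via symmetric difference and check that feasibility is preserved, and your verification that $\opt{i}'$ and $\opt{j}'$ avoid every $\opt{h}$ with $h\notin\{i,j\}$ is essentially the paper's argument. The problem is the step you yourself flag as the hard part. Re-gluing the four half-walks that meet at a common vertex $v$ can never make the two resulting walks vertex-disjoint: under either of the two possible pairings, both resulting walks still pass through $v$, so the number of shared vertices does not decrease --- your claim of ``strictly fewer shared vertices'' is false. Moreover, the only re-gluing other than the identity swaps the tails and produces an $s_i$-to-$t_j$ walk and an $s_j$-to-$t_i$ walk, i.e.\ exactly the pairing you need to avoid; there is no third option for planarity to ``force.'' Uncrossing removes transversal crossings between curves that are permitted to touch; it cannot repair a violation of vertex-disjointness at a touching point.

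The paper sidesteps this by (implicitly) taking the competitor $(\beta',\gamma')$ to be a minimum-length pair that is also mutually vertex-disjoint --- this is the object the algorithm actually computes via the $2$-min-sum reduction of Lemma~\ref{lem:2minSumplus} --- so that $\opt{i}'$ and $\opt{j}'$ are ``disjoint by their construction'' and no uncrossing is needed. (The omission of mutual disjointness from properties (1) and (2) in the lemma statement is a looseness of the paper; its proof only makes sense under that reading.) If you restrict your competitors to disjoint pairs, your argument goes through and coincides with the paper's. As written, allowing non-disjoint competitors, the uncrossing step is a genuine gap: when $\border{i}{j}[x_i,y_i]$ and $\border{i}{j}[x_j,y_j]$ overlap, the unconstrained minimum pair may genuinely share border vertices, and no length-preserving local surgery turns it into a feasible replacement.
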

\begin{proof}
Properties (1) and (2) of $\beta$ and $\gamma$ are implied by the definition of Type~II paths and because they are internally disjoint from all borders except $\border{i}{j}$.  It remains to show that their total length is minimum.

Let $(\beta',\gamma')$ be the pair of paths with minimum total length that has properties (1) and (2). 
Let $\opt{i}' = \opt{i}\oplus \beta \oplus \beta'$, and let $\opt{j}' = \opt{j} \oplus \gamma \oplus \gamma'$.

By construction, $\beta'$ is internally disjoint from all borders except (possibly) $\border{i}{j}[x_i, y_i]$.  Additionally, the endpoints of $\beta'$ are the same as the endpoints of $\beta$.
Therefore, intersection points of $\beta'$ with borders of $\region{i}$ that are not in $\beta$ are all in $\border{i}{j}[x_i, y_i]$.
Similarly, intersection points of $\gamma'$ with borders of $\region{j}$ that are not in $\beta'$ are all on $\border{i}{j}[x_j, y_j]$.
It immediately follows that $\opt{i}'$ and $\opt{h}$ are disjoint for any $h\in \{1,2,\ldots, k\}\backslash \{j\}$.
Similarly, $\opt{j}'$ and $\opt{h}$ are disjoint for any $h\in \{1,2,\ldots, k\}\backslash \{i\}$.

Furthermore, $\opt{i}'$ and $\opt{j}'$ are disjoint by their construction.  Thus, $\{\opt{1}, \ldots, \opt{k}\}\backslash \{\opt{i},\opt{j}\} \cup \{\opt{i}', \opt{j}'\}$ is a set of $k$ pairwise disjoint walks connecting the terminals.  The total length of this new set is $OPT - \len{\beta} - \len{\gamma} + \len{\beta'} + \len{\gamma'} \geq OPT$.  Consequently, $\len{\beta} + \len{\gamma} \leq \len{\beta'} + \len{\gamma'}$, in fact, $\len{\beta} + \len{\gamma} = \len{\beta'} + \len{\gamma'}$.  Thus, $(\alpha, \beta)$ are a pair of path with minimum total length that has properties (1) and (2).
\end{proof}

\section{Algorithmic toolbox} %\aNote{Change the title of this section}

In this section, we describe algorithms to compute paths of Type~I~and~II for given endpoints.
These algorithms are key ingredients of our strongly polynomial time algorithm described in the next section.
More directly, they imply an $n^{O(k)}$ time algorithm via enumerating the endpoints, which is sketched at the end of this section.

Each Type I path can be computed in linear time using the algorithms of Henzinger et al.~\cite{henzinger1997planarShortestPaths}; they can also be computed in bulk in $O(n \log n)$ time using the multiple-source shortest path algorithm of Klein~\cite{klein2005multiSourceShPaths} (although other parts of our algorithms dominate the shortest path computation).
Similarly, a Type~II path on a border $\border{i}{j}$ can be computed in linear time provided it is the only path on $\border{i}{j}$.
Computing pairs of Type~II paths on a shared border is slightly more challenging.  
To achieve this, our algorithm reduces this problem into a $2$-min sum problem that can be solved in linear time via a reduction to the minimum-cost flow problem.  
The following lemma is implicit in the paper of Kobayashi and Sommer~\cite{kobayashi2010shortest}. 

\begin{lemma}
\label{lem:2minSum}
There exists a linear time algorithm to solve the $2$-min sum problem on an undirected planar graph, provided the terminals are on the outer face.
\end{lemma}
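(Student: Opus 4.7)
The plan is to analyze the cyclic order of the four terminals $s_1, t_1, s_2, t_2$ around the outer face; up to relabeling and reflection there are only two cases. If the order is the interleaved one $s_1, s_2, t_1, t_2$, then by the Jordan curve theorem any $s_1$-to-$t_1$ curve and any $s_2$-to-$t_2$ curve drawn in the plane graph must cross, so no feasible vertex-disjoint solution exists and we report \emph{infeasible} after a single linear scan of $\partial G$. Otherwise the order is parallel or nested, and feasibility is guaranteed.

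In the feasible case, I would reduce to a planar min-cost flow problem of value two. Add a super-source $\sigma$ in the outer face joined to $s_1, s_2$ and a super-sink $\tau$ joined to $t_1, t_2$, all by zero-length edges; planarity is preserved because the four terminals lie on $\partial G$. Encode vertex-disjointness by the standard splitting of each vertex $v$ into $v_{\text{in}}, v_{\text{out}}$ joined by a unit-capacity zero-cost edge, which also preserves planarity and keeps $\sigma, \tau$ on the outer face. A min-cost $\sigma$-to-$\tau$ flow of value two decomposes into two vertex-disjoint paths of minimum total length, and a non-crossing argument in the planar embedding forces the pairing to be $s_i \leftrightarrow t_i$: in a non-interleaved cyclic order the alternative pairing would require the two flow paths to cross each other, contradicting their vertex-disjointness in the plane.

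For the running time, it suffices to perform two successive shortest-path augmentations. The first is a shortest-path computation in an undirected planar graph with source and sink on a common face, which runs in $O(n)$ time by the algorithm of Henzinger et al.~\cite{henzinger1997planarShortestPaths}. The second runs on the residual graph, which contains negative-weight reverse edges only along the first augmenting path; I would eliminate the negative weights with a Johnson-style potential defined by the first shortest-path distances, turning the second augmentation into another nonnegative planar shortest-path computation, again in $O(n)$ time. Summing gives the claimed linear total running time.

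The main obstacle I anticipate is handling the residual graph cleanly: after the first augmentation it is no longer an ordinary undirected planar graph, and a naive invocation of Henzinger et al.'s algorithm would encounter negative edge weights. The potential-function reduction above resolves this, and it is precisely the step that is implicit in Kobayashi and Sommer's treatment~\cite{kobayashi2010shortest}; once it is in place, Lemma~\ref{lem:2minSum} follows from two planar shortest-path computations combined with the non-crossing pairing argument.
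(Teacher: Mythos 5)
Your proposal is correct and follows essentially the same route as the paper: reduce to a minimum-cost flow of value two via vertex splitting and a super-source/super-sink attached to $s_1,s_2$ and $t_1,t_2$ on the outer face, then solve it with two planar shortest-path computations of Henzinger et al. You are in fact somewhat more careful than the paper on two points it leaves implicit---rejecting the interleaved cyclic order up front (and arguing via non-crossing that the flow decomposition pairs $s_i$ with $t_i$), and using a Johnson-style potential from the first shortest-path tree to keep the second augmentation nonnegative---but these are refinements of the same argument, not a different one.
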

\begin{proof}
Consider an instance of the $2$-min sum problem with terminals $\{(s_1, t_1), (s_2, t_2)\}$. This problem reduces to a minimum-cost flow problem as follows.
Because of the symmetry for terminals in undirected graphs, we can assume that $s_1$ and $s_2$ are next to each other on the outer face:  there is a $s_1$-to-$s_2$ path on the boundary of the outer face that does not intersect $\{t_1, t_2\}$. We make the graph directed by replacing each undirected edge $\{u,v\}$ with edges $(u, v)$ and $(v, u)$. For edge $(u,v)$ in the directed graph, we assign its length using length function $\ell(u,v)$. For every vertex $v$ in $G$, we split it into two vertices $v_{1}$ and $v_{2}$ and connect them with a zero length edge that has unit capacity. For each edge $(u, v)$, we connect $u$ to $v_{1}$, and for each edge $(v, u)$, we connect $v_{2}$ to $u$. We introduce a dummy source vertex $d_1$, with two edges $(d_1, s_1)$, and $(d_1, s_2)$ of unit capacity and zero length.  Also, we introduce a dummy sink vertex $d_2$, with edges $(t_1, d_2)$ and $(t_2, d_2)$ of unit capacity and zero length.  We assign capacity one to all other edges.  The lengths of the other edges are specified by the length function $\ell$ of $G$. Since, $s_1$ and $s_2$ (also $t_1$ and $t_2$) are next to each other, the graph remains planar after adding $u$, $v$, and their incident edges.
Now, it is straight forward to see that our $2$-min sum problem is equivalent to a minimum cost $u$-to-$v$ flow problem of value $2$.
This minimum cost flow problem in turn reduces to two shortest path computations from $u$ to $v$, which can be done in linear time~\cite{henzinger1997planarShortestPaths}.
\end{proof}

We reduce the computation of Type II paths to $2$-min sum.  The following lemma is a slightly stronger form of this reduction, which finds application in our strongly polynomial time algorithm. 

\begin{lemma}
 \label{lem:2minSumplus}
Let $\region{i}$ and $\region{j}$ be two regions with border $\border{i}{j}$ and let $x_i, y_i \in  \spath{i}$ and $x_j, y_j \in \spath{j}$.  A pair of paths $(\beta, \gamma)$ with total minimum length and with the following properties can be computed in linear time.
\begin{enumerate}
\item $\beta$ is an $x_i$-to-$y_i$ path inside $\region{i}$, and it is internally disjoint from all borders except possibly $\spath{i}[x_i, y_i] \cap \border{i}{j}$.
\item $\gamma$ is an $x_j$-to-$y_j$ path inside $\region{j}$, and it is internally disjoint from all borders except possibly $\spath{j}[x_j, y_j] \cap \border{i}{j}$.  
\end{enumerate}
\end{lemma}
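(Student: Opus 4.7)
The plan is to reduce the computation of $(\beta, \gamma)$ to an instance of the $2$-min sum problem on a carefully built auxiliary planar graph $H$, and then invoke Lemma~\ref{lem:2minSum}. I would construct $H$ from $G[V(\region{i}) \cup V(\region{j})]$ in three stages. First, delete every interior vertex of every border of $\region{i}$ or $\region{j}$ other than $\border{i}{j}$; this enforces the ``internally disjoint from all borders'' clause of conditions (1) and (2). Second, duplicate $\border{i}{j}$ by splitting each interior vertex $v$ into a $\region{i}$-copy $v^{(i)}$ (retaining only $v$'s edges to $\region{i}$-side neighbors) and a $\region{j}$-copy $v^{(j)}$ (retaining only the $\region{j}$-side edges), while leaving the two endpoints of $\border{i}{j}$ shared; the only way to move between the two sides of $H$ is now via those shared endpoints. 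Third, on the $\region{i}$-copy of $\border{i}{j}$ delete every interior vertex not in $\spath{i}[x_i, y_i] \cap \border{i}{j}$, and perform the symmetric deletion on the $\region{j}$-copy, so that the usable part of $\border{i}{j}$ is exactly what the lemma allows for each of $\beta$ and $\gamma$.

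Next I would verify that $H$ is planar with all four terminals on its outer face. Because the two sides of $H$ meet only at the shared endpoints of $\border{i}{j}$, they embed as two lobes glued at those two points; the outer boundary is traced by $C_i$, $C_j$, the parts of $\spath{i}$ and $\spath{j}$ outside $\border{i}{j}$, and the trimmed copies of $\border{i}{j}$. Thus $x_i, y_i \in \spath{i}$ and $x_j, y_j \in \spath{j}$ all lie on the outer face, with $\{x_i, y_i\}$ and $\{x_j, y_j\}$ occupying disjoint arcs, giving the non-crossing configuration required to apply Lemma~\ref{lem:2minSum}.

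I would then establish a length-preserving bijection between pairs $(\beta, \gamma)$ in $G$ satisfying (1) and (2) and pairs of vertex-disjoint $(x_i, y_i)$ and $(x_j, y_j)$ paths in $H$. In one direction, a vertex shared by $\beta$ and $\gamma$ on the interior of $\border{i}{j}$ is routed through the two distinct copies and the paths become disjoint in $H$; in the other direction, the stage-one deletions force border-avoidance, the stage-two splitting confines each path to its own region, and the stage-three trimming restricts incursions into $\border{i}{j}$ to the permitted subpath. The minimum-length pair $(\beta, \gamma)$ therefore coincides with the $2$-min sum optimum in $H$, which Lemma~\ref{lem:2minSum} computes in linear time.

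The main obstacle I expect is the careful handling of the two shared endpoints of $\border{i}{j}$: a $2$-min sum solution in $H$ could in principle route both paths through such an endpoint, or escape into the ``wrong'' region through it, breaking the bijection. I would handle this by duplicating these endpoints too with a tiny length-zero gadget that enforces side-containment, or by arguing via the uniqueness-of-shortest-paths assumption that no minimum-length pair can benefit from crossing at such an endpoint. A secondary and easier point is the case where a terminal itself lies on $\border{i}{j}$; this is resolved by feeding the appropriate side's copy of that terminal into Lemma~\ref{lem:2minSum}.
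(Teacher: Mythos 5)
Your overall strategy --- reduce to the $2$-min sum problem of Lemma~\ref{lem:2minSum} on a trimmed subgraph of $G[V(\region{i})\cup V(\region{j})]$ --- matches the paper's, and your first and third stages (deleting the other borders, and restricting access to $\border{i}{j}$ to the allowed windows) correspond to the paper's deletions. But your second stage, splitting each interior vertex $v$ of $\border{i}{j}$ into two copies $v^{(i)}$ and $v^{(j)}$, introduces a genuine error. Although properties (1) and (2) do not state it explicitly, the pair $(\beta,\gamma)$ produced by this lemma must be mutually vertex-disjoint in $G$: it is used to replace the two Type~II subpaths of $\opt{i}$ and $\opt{j}$ on a shared border (Lemma~\ref{lem:beta_opt}) and, in the dynamic program, to supply the ``minimum-length vertex disjoint'' connector paths $M[u,v]$. (Indeed, if mutual disjointness were not required, the problem would decompose into two independent shortest-path computations and no $2$-min sum reduction would be needed at all.) Your duplication deliberately lets $\beta$ and $\gamma$ pass through the same interior vertex of $\border{i}{j}$ --- you present the routing through $v^{(i)}$ and $v^{(j)}$ as a feature --- so the minimum you compute can be strictly smaller than the true optimum over disjoint pairs, and the returned pair can be infeasible in $G$. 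The ``length-preserving bijection'' you claim is therefore with the wrong set of pairs.

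The repair is simply not to duplicate: the paper keeps a single copy of $\border{i}{j}$, deletes the vertices of all other borders and the edges that would let a path in $\region{i}$ (resp.\ $\region{j}$) touch $\border{i}{j}$ outside $\spath{i}[x_i,y_i]\cap\border{i}{j}$ (resp.\ $\spath{j}[x_j,y_j]\cap\border{i}{j}$), observes that $x_i,y_i,x_j,y_j$ then lie on the outer face of the remaining graph $H'$, and applies Lemma~\ref{lem:2minSum} directly (or, if $H'$ is disconnected, two independent linear-time shortest-path computations). Vertex-disjointness at the two endpoints of $\border{i}{j}$ is then enforced automatically by the vertex-capacitated min-cost-flow formulation, so no endpoint gadget is needed. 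A final replacing-path argument, as in Lemma~\ref{lem:sp_in_region}, using that every subpath of $\border{i}{j}$ is a shortest path, confirms that the computed paths stay inside their respective regions.
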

\begin{proof}
Let the graph $H$ be the induced subgraph by the vertices of $G$ inside $\region{i} \cup \region{j}$.
We obtain $H'$ from $H$ by performing the following operations:
\begin{enumerate}
\item deleting all vertices of $H$ that belong to borders other than $\border{i}{j}$,
\item deleting all edges in $\region{i}$ that are incident to $\border{i}{j}$ but not incident to $\spath{i}[x_i, y_i] \cap \border{i}{j}$, and
\item deleting all edges in $\region{j}$ that are incident to $\border{i}{j}$ but not incident to $\spath{j}[x_j, y_j] \cap \border{i}{j}$.
\end{enumerate}
Note that $H'$ is not necessarily connected.  For an illustration of $H$ and $H'$, see Figure~\ref{fig:mincost}.

\begin{figure}[tbh]
  \centering
    \includegraphics[height=1.5in]{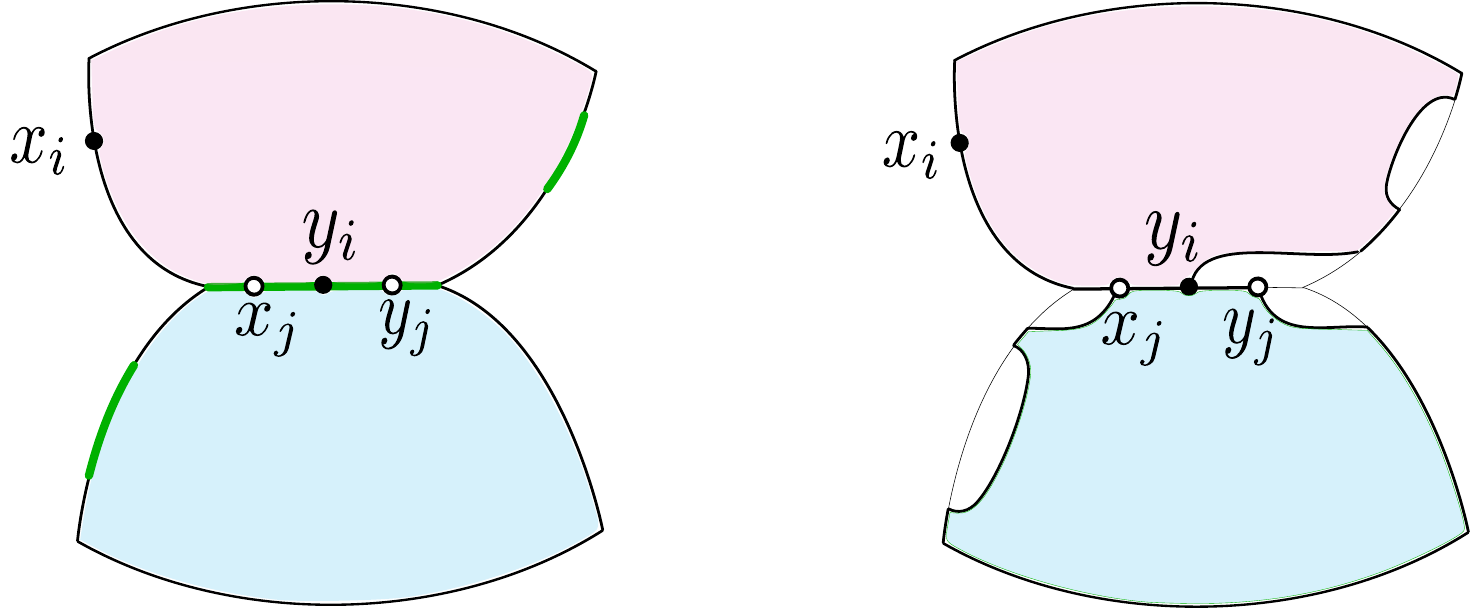}
      \caption{(left) The graph $H$ induced by vertices in $\region{i}\cup\region{j}$; regions are shaded and borders are green. (right) The graph $H'$ defined in the proof of Lemma~\ref{lem:2minSumplus}}
  \label{fig:mincost}
\end{figure}

Since $\beta$ and $\gamma$ are intact in $H'$, they can be computed in $H'$ instead of $G$.
Furthermore, observe that $x_i, y_i, x_j, y_j$ are on the boundary of $H'$.  If $H'$ is disconnected, then $\beta$ and $\gamma$ are a shortest paths in their own connected components.  So they can be computed in linear time using the algorithm of Henzinger et al.~\cite{henzinger1997planarShortestPaths}.

So, suppose $H'$ is connected, and observe that $x_i, y_i, x_j, y_j$ are on the boundary of the outer face of $H'$.  By Lemma~\ref{lem:2minSum}, there is a linear time algorithm to compute a pair of disjoint paths of minimum total length between corresponding terminals.

Let $\beta'$ and $\gamma'$ be $x_i$-to-$y_i$ path and $x_j$-to-$y_j$ path computed via solving a $2$-min sum problem. 
It remains to prove that $\beta'$ and $\gamma'$ have properties (1) and (2) of the lemma.
That is $\beta' \in H'\cap \region{i}$, and $\gamma' \in H'\cap\region{j}$.  This can be done through a replacing path argument similar to Lemma~\ref{lem:sp_in_region}, as $\border{i}{j}$ (so, any subpath of it) is a shortest path.
\end{proof}

\subsection{An $n^{O(k)}$ time algorithm}

The properties of Type~I~and~II paths imply a na\"ive $n^{O(k)}$ time algorithm, which we sketch here.
An optimal solution defines the endpoints of Type~I and Type~II paths, so we can simply enumerate over which borders contain endpoints of Type~I and~II paths and then enumerate over the choices of the endpoints.  
Consequently, there are zero, two, or four (not necessarily distinct) endpoints of Type~I and~II paths on $\border{i}{j}$, or
\[
1 + {\len{\border{i}{j}} \choose 2} + {\len{\border{i}{j}} \choose 4}
\]
possibilities, which is $O(n^4)$ since $\len{\border{i}{j}} = O(n)$.  Since there are $O(k)$ borders (Lemma~\ref{lem:bounded_number_borders}), there are $n^{O(k)}$ endpoints to guess.  Given the set of endpoints, we compute a feasible solution composed of the described Type~I and~II paths or determines that no such solution exists.  Since Type~I and~II paths can be computed in polynomial time, the overall 
algorithm runs in $n^{O(k)}$ time.

% Once the set of endpoints of and 
% is guessed the partition of optimal paths is known, and it remains to compute paths of Type I and II to complete the picture.  
% Lemma~\ref{lem:alpha_opt} reduces computing Type I paths to a shortest path computation.
% Similarly, Lemma~\ref{lem:beta_opt0} reduces computing Type II paths to a shortest path computation, if that path is the only path on a border.
% Consequently, each of these paths can be computed in linear time using the algorithms of Henzinger et al.~\cite{henzinger1997planarShortestPaths}.  
% In addition, a pair of Type II paths on a shared border can be computed in linear time by Lemma~\ref{lem:2minSumplus}.
% Overall, computing all paths takes polynomial time, once the endpoints are given.  Thus, our 

\section{A fully polynomial time algorithm}

We give an $O(kn^5)$-time algorithm via dynamic programming over the regions.  For two regions $R_i$ and $R_j$ that have a shared border $B_{i,j}$, $R_i$ and $R_j$ separate the terminal pairs into two sets: those terminals $s_\ell,t_\ell$ for $\ell = i+1,\ldots, j-1$ and $s_m,t_m$ for $m = j+1, \ldots, k, 1, \ldots, i-1$ (for $i < j$).  Any $s_\ell$-to-$t_\ell$ path that is in region $R_\ell$ cannot touch any $s_m$-to-$t_m$ path that is in region $R_m$ since $R_\ell$ and $R_m$ are vertex disjoint.  Therefore any influence the $s_\ell$-to-$t_\ell$ path has on the $s_m$-to-$t_m$ path occurs indirectly through the $s_i$-to-$t_i$ and $s_j$-to-$t_j$ paths.  Our dynamic program is indexed by the shared borders $B_{i,j}$ and pairs of vertices on (a subpath of) $P_i$ and (a subpath of) $P_j$; the vertices on $P_i$ and $P_j$ will indicate a {\em last point} on the boundary of $R_i$ and $R_j$ that a (partial) feasible solution uses. 

We use a tree to order the shared borders for processing by the dynamic program. Since there are $O(k)$ borders (Lemma~\ref{lem:bounded_number_borders}), the dynamic programming table has $O(k n^2)$ entries.  We formally define the dynamic programming table below and show how to compute each entry in $O(n^3)$ time.

\subsection{Dynamic programming tree}
Let ${\cal R} = \{\region{i}\}_{i=1}^k$ and $\cal B$ be the set of all borders between all pairs of regions.
We assume, without loss of generality, that $\cal R$ is connected, otherwise we  split the problem into independent subproblems, one for each connected component of $\cal R$.

We define a graph $T$ (that we will argue is a tree) whose edges are the shared borders $\cal B$ between the regions $\mathcal R$.
Two distinct borders $\border{i}{j}$ and $B_{h,\ell}$ are incident in this graph if there is an endpoint $x$ of $\border{i}{j}$ and $y$ of $B_{h,\ell}$  that are connected by an $x$-to-$y$ curve in $\mathbb{R}^2 \setminus F$ that does not touch any region $\mathcal R$ except at its endpoints $x$ and $y$;  see Figure~\ref{fig:T}.  Note that this curve may be trivial (i.e.\ $x = y$).  The vertices of $T$ (in a non-degenerate instance) correspond one-to-one with components of $\mathbb{R}^2 \setminus (F \cup {\cal R})$ (plus some additional trivial components if three or more regions intersect at a point), or {\em non-regions}. The edges of $T$ cannot form a cycle, since by the Jordan Curve Theorem this would define a disk that is disjoint from $\partial F$;  an edge $B_{i,j}$ in the cycle bounds two regions, one of which would be contained by the disk, contradicting that each region shares a boundary with $\partial F$.  Therefore $T$ is indeed a tree.   We use an embedding of $T$ that is derived naturally from the embedding of $G$ according to this construction.  We use this tree to guide the dynamic program.

\begin{figure}[ht]
  \centering
        \includegraphics[height=1.2in]{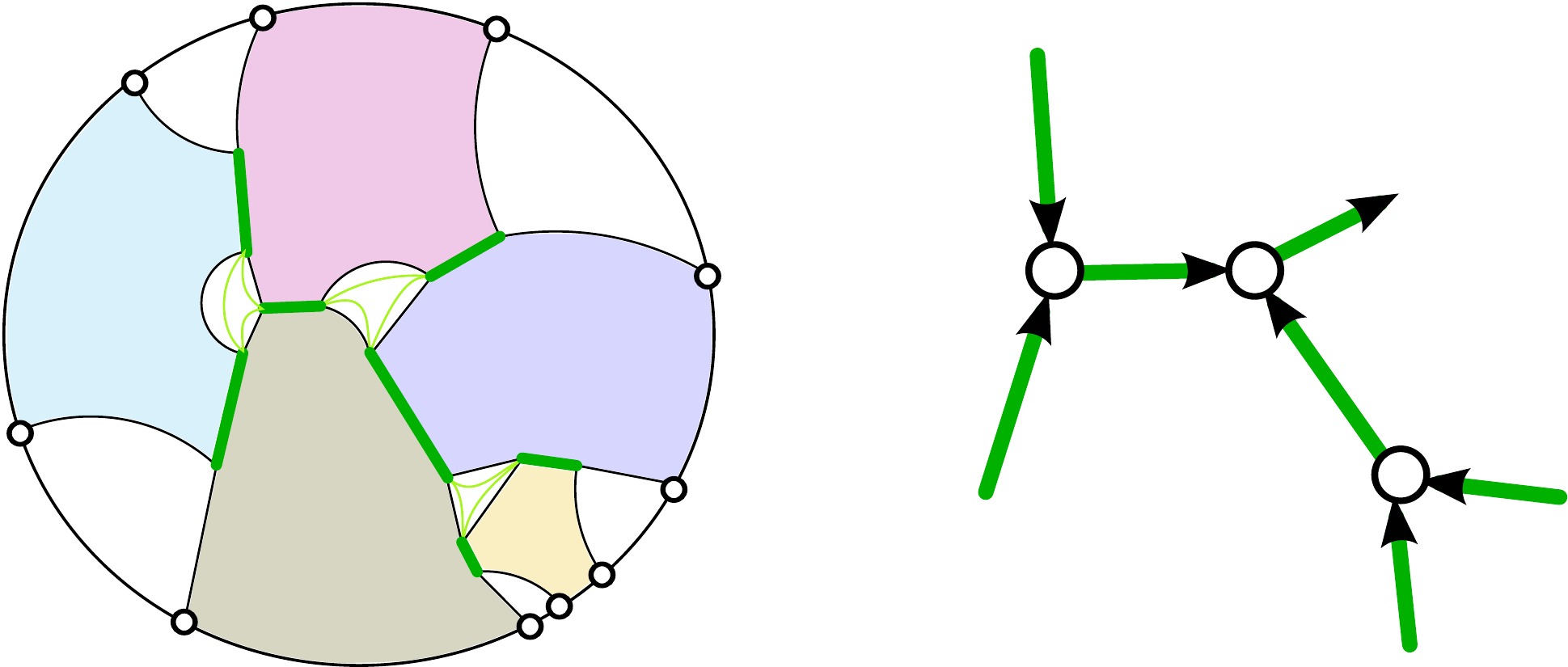}
      \caption{(left) Thick green segments are borders, thin green curves are in $\mathbb{R}^2\setminus (F\cup \mathcal{R})$ connecting borders that are incident in $T$. (right) The directed tree $T$ used for dynamic programming.}
\label{fig:T}
\end{figure}

By the correspondence of the vertices of $T$ to non-regions, we have:
\begin{observation}\label{obs:region-path-T}
  The borders $B_{i,j}, B_{i,\ell}, \ldots$ along a given region $R_i$ form a path in $T$.  The order of the borders from $s_i$ to $t_i$ along $P_i$ is the same as in the path in $T$.
\end{observation}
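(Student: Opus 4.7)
The plan is to enumerate the borders $B_{i,j_1},B_{i,j_2},\ldots,B_{i,j_m}$ incident to $R_i$ in the order their endpoints appear along $P_i$ from $s_i$ to $t_i$, and to show that each consecutive pair in this sequence is incident in $T$. Since the borders are pairwise distinct edges of $T$, this presents the sequence as a trail in the tree $T$; because any trail in a tree is automatically a simple path, the borders then form a path of $T$ whose ordering agrees with their ordering along $P_i$ by construction.

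The heart of the argument is the incidence step. For each $h\in\{1,\ldots,m-1\}$, let $u_h$ be the endpoint of $B_{i,j_h}$ nearest $t_i$ along $P_i$, let $u'_{h+1}$ be the endpoint of $B_{i,j_{h+1}}$ nearest $s_i$ along $P_i$, and let $\pi_h$ be the (possibly trivial) subpath of $P_i$ connecting them. I will exhibit a short curve $\sigma_h$ from $u_h$ to $u'_{h+1}$ that hugs $\pi_h$ on the side opposite $R_i$ and lies entirely in a single component of $\mathbb{R}^2\setminus(F\cup\mathcal{R})$; such a component is then a vertex of $T$ incident to both $B_{i,j_h}$ and $B_{i,j_{h+1}}$. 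To justify the non-region property of $\sigma_h$: if a point on $\sigma_h$ just off $\pi_h$ lay in some $R_\ell$ with $\ell\neq i$, then $\pi_h$ would locally lie on $\partial R_\ell$, forcing a subpath of $\pi_h$ into $P_i\cap P_\ell\subseteq B_{i,\ell}$ by uniqueness of shortest paths, which places a border strictly between $B_{i,j_h}$ and $B_{i,j_{h+1}}$ along $P_i$ and contradicts consecutiveness. The same logic rules out $\sigma_h$ drifting into $F$, since any shared portion of $\pi_h$ with $\partial F=\partial G$ would lie on some $C_\ell$ and therefore be absorbed into the border $B_{i,\ell}$.

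Once incidence is established, the remainder is tree-theoretic bookkeeping: the sequence of borders is a walk in $T$ by the incidence step, a trail because the borders are distinct edges of $T$, and therefore a simple path because any repeated vertex in a walk through a tree would close a cycle. The ordering in the resulting $T$-path is inherited from the enumeration along $P_i$, which settles the second half of the observation. I expect the main obstacle to be in cleanly formalizing the local-topology claim that no other region and no face other than the one bordering $R_i$ along $\pi_h$ can lie immediately on the non-$R_i$ side of $\pi_h$; the key input is that every segment shared by the boundaries of two regions is captured inside some border, so any intrusion by $R_\ell$ or $\partial F$ pushes a border into the forbidden interior of $\pi_h$ and contradicts the consecutiveness of $B_{i,j_h}$ and $B_{i,j_{h+1}}$.
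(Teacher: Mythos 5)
Your strategy is the one the paper intends---the paper offers no proof beyond the phrase ``by the correspondence of the vertices of $T$ to non-regions,'' and fleshing that out as ``consecutive borders along $\spath{i}$ see a common non-region on the far side of $\spath{i}$, hence are incident in $T$; distinct edges of a tree forming a walk form a path'' is the right skeleton. Your treatment of intervening regions is also fine: a point of some $\region{\ell}$ immediately on the far side of $\pi_h$ does force a piece of $\border{i}{\ell}$ into the interior of $\pi_h$, contradicting consecutiveness. The gap is in the step that keeps $\sigma_h$ out of $F$. You claim that any portion of $\pi_h$ lying on $\partial G$ must lie on some $C_\ell$ and hence be absorbed into a border. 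That is false: the arcs $C_1,\ldots,C_k$ do not cover $\partial G$. Since the terminals are assumed distinct, the arc of $\partial G$ between $t_\ell$ and $s_{\ell+1}$ contains no terminal in its interior and belongs to no $C_m$, hence to no region. If $\spath{i}$ uses an edge of such an arc strictly between two consecutive borders, the far side of $\pi_h$ there is the outer face $F$ itself; $\sigma_h$ may not enter $F$, may not run along $\pi_h$ (which lies in $\region{i}$), and so has nowhere to go.

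This is not merely a presentational issue; the configuration can occur, and then the incidence you need genuinely fails. Take $k=3$ with boundary cycle $s_1,t_1,s_2,t_2,s_3,t_3,p,q$ in counter-clockwise order, give each edge $s_it_i$ length $100$ and every other boundary edge length $1$, and add chords $s_2q$ and $t_2p$ of length $1$. Then $\spath{1}=(s_1,q,s_2,t_1)$, $\spath{2}=(s_2,q,p,t_2)$, $\spath{3}=(s_3,t_2,p,t_3)$; the three regions are the three internal faces and tile the disk; $\border{1}{2}$ is the edge $qs_2$, $\border{2}{3}$ is the edge $pt_2$, and the edge $qp$ of $\spath{2}$ between them lies on $\partial G$ with $F$ on its far side. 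There is no non-region at all, the four border endpoints $q,s_2,p,t_2$ are distinct, and no admissible curve joins an endpoint of $\border{1}{2}$ to an endpoint of $\border{2}{3}$, so the two borders of $\region{2}$ are not adjacent in $T$ and do not form a path. To close the gap you would need either an extra hypothesis (e.g.\ every vertex of $\partial G$ lies on some $C_\ell$, as when $t_\ell$ and $s_{\ell+1}$ are always adjacent) or an argument that shortest paths never traverse the terminal-free arcs of $\partial G$ strictly between consecutive borders; the paper, which silently assumes the correspondence you are trying to prove, shares exactly this blind spot.
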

Consider two edges $B_{i,j}$ and $B_{h,\ell}$ that are incident to the same vertex $v$ of $T$ and that are consecutive in a cyclic order of the edges incident to $v$ in $T$'s embedding.  By Observation~\ref{obs:region-path-T} and the embedding of $T$, there is a labeling of $i,j,h,\ell$ such that:
\begin{observation}\label{obs:first-last-border}
  Either $j = h$ or $B_{i,j}$ is the last border of $R_j$ along $P_j$ and $B_{h,\ell}$ is the first border of $R_h$ along $P_h$.
\end{observation}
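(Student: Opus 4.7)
The plan is to unpack the geometry behind the vertex $v$ of $T$ and to analyze the arc of the non-region's boundary that lies between the two consecutive borders. First, recall that $v$ corresponds to a connected component $X$ of $\mathbb{R}^2 \setminus (F \cup \mathcal{R})$ (or to a trivial point component where three or more regions meet). By the construction of $T$, the cyclic order of the borders incident to $v$ in $T$'s embedding matches the cyclic order in which these borders appear along $\partial X$ in $G$'s embedding, so I will reason directly about $\partial X$.

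Fix the orientation of $\partial X$ so that $X$ lies on the left, and let $\alpha$ be the arc of $\partial X$ that joins $B_{i,j}$ to $B_{h,\ell}$ and contains no other border (this exists because the two borders are consecutive around $v$). After swapping $i \leftrightarrow j$ if necessary, $\alpha$ leaves $B_{i,j}$ along $\partial R_j$; after swapping $h \leftrightarrow \ell$ if necessary, $\alpha$ arrives at $B_{h,\ell}$ along $\partial R_h$. The arc $\alpha$ is contained in $\partial R_j \cup \partial F \cup \partial R_h$. If $\alpha$ never meets $\partial F$, then the region on the side of $\alpha$ opposite $X$ is constant along $\alpha$ (it can change only by crossing a border or a piece of $\partial F$, and neither happens in the interior of $\alpha$), forcing $j = h$. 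Otherwise $\alpha$ visits $\partial F$, and the piece of $\alpha$ along $\partial R_j$ runs from $B_{i,j}$'s endpoint to an endpoint of $P_j$ without meeting any other border of $R_j$; this forces $B_{i,j}$ to be either the first or the last border of $R_j$ along $P_j$, and the analogous argument applies to $B_{h,\ell}$ and $P_h$.

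It remains to argue that the labels can be chosen so that $B_{i,j}$ is specifically the \emph{last} border of $R_j$ along $P_j$ and $B_{h,\ell}$ is specifically the \emph{first} border of $R_h$ along $P_h$. I plan to combine two ingredients: the counter-clockwise ordering of the terminals along $\partial F$, which pins down the direction of $P_j$ (from $s_j$ to $t_j$) relative to the fixed orientation ``$X$ on the left'', and Observation~\ref{obs:region-path-T}, which guarantees that the order of $R_j$'s borders along $P_j$ agrees with their order along the path in $T$ corresponding to $R_j$. Tracing $\alpha$ along $\partial R_j$ from $B_{i,j}$ toward $\partial F$ then reaches $t_j$ rather than $s_j$, placing $B_{i,j}$ last in $R_j$; the analogous argument places $B_{h,\ell}$ first in $R_h$.

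The main obstacle will be this final orientation bookkeeping --- ensuring that the pairing is always ``last of $R_j$ followed by first of $R_h$'' and never the reverse. This requires a careful joint use of the counter-clockwise ordering of the terminals on $\partial F$ together with the embedding of $T$ inherited from $G$, and it is the only place where the global ordering assumption on the terminals enters the argument.
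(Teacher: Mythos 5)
Your argument is correct and is essentially the paper's own (implicit) argument: the paper offers no explicit proof of this observation, justifying it only by Observation~\ref{obs:region-path-T} and the embedding of $T$, and your analysis of the arc of the non-region's boundary between the two consecutive borders is precisely the natural unpacking of that justification. The orientation step you flag as the remaining obstacle does go through exactly as you sketch: with the non-region $X$ kept on the left, the counter-clockwise order of the terminals forces every piece of $P_j$ on $\partial X$ to be traversed from $s_j$ toward $t_j$ and every piece of $\partial G$ to be traversed counter-clockwise, so the arc leaving the endpoint of $B_{i,j}$ along $P_j$ reaches $t_j$ with no intervening border endpoint (making $B_{i,j}$ last on $P_j$), then runs along $\partial G$ to a source $s_h$ and enters $P_h$ from its start (making $B_{h,\ell}$ first on $P_h$).
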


% Observe that by the Jordan curve theorem, each vertex of $T$ is incident to at most two edges correspondent to different borders of the same region.  Further, if $\region{i}$ has two of its borders incident to the same vertex of $T$, then these borders are consecutive on $\spath{i}$, and if it has only one of its borders incident to the same vertex of $T$, then that border is the first or last border along $\spath{i}$.  It follows, as ${\cal R}$ is connected, that for any vertex $v\in T$, there are zero or two regions that have exactly one border incident to $v$.

Root $T$ at an arbitrary leaf.  Since $\cal R$ is connected, the leaf of $T$ is non-trivial; that is, it has an edge $B_{i,j}$ incident to it.  By Observation~\ref{obs:region-path-T}, $B_{i,j}$ is (w.l.o.g.) the last border of $R_i$ along $P_i$ and the first border of $R_j$ along $P_j$.  By the correspondence of the vertices of $T$ to non-regions, and the connectivity of $\cal R$, either $j = i+1$ or $i = 1$ and $j = k$.
For ease of notation, we assume that the terminals are numbered so that $i = 1$ and $j = k$.  We get:
\begin{observation}\label{obs:leaf-edge}
  Every non-root leaf edge of $T$ corresponds to a border $B_{i,i+1}$.
\end{observation}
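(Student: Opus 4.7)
The plan is to redeploy the same leaf analysis that the paragraph above already applied to the root leaf edge. Let $B_{i,j}$ be the unique edge incident to a non-root leaf $v$ of $T$. Since the borders along $R_i$ form a path in $T$ by Observation~\ref{obs:region-path-T} and $B_{i,j}$ is an endpoint of that path, $B_{i,j}$ is either the first or the last border of $R_i$ along $P_i$; the same holds for $R_j$. After swapping labels if necessary, I assume $B_{i,j}$ is the last border of $R_i$ and the first border of $R_j$, so its relevant endpoint $x$ lies on the $t_i$-side of $P_i$ and on the $s_j$-side of $P_j$.

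Next, I would trace the boundary of the non-region $N$ corresponding to $v$. Because $v$ has no other incident edges in $T$, no region other than $R_i$ and $R_j$ contributes to $\partial N$, so starting from $x$ and walking along $\partial N$ one sees a subpath of $P_i$ from $x$ to $t_i$, then an arc of $\partial G$ from $t_i$ to some terminal $b \in \{s_j, t_j\}$, then a subpath of $P_j$ from $b$ back to $x$. That $\partial G$-arc contains no terminals in its interior, for otherwise the region hosting such a terminal would contribute a further bounding arc to $N$ and create a second incident border at $v$, contradicting leafness. Using the given cyclic order $s_1, t_1, s_2, t_2, \ldots, s_k, t_k$, the only way two terminals belonging to different regions can be consecutive on $\partial G$ is as $(t_\ell, s_{\ell+1})$ modulo $k$, which forces either $j = i+1$ or $\{i, j\} = \{1, k\}$ (the wraparound case).

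Finally, I would exclude the wraparound: the border $B_{1,k}$ is a single edge of $T$, and by the rooting convention fixed just above, the leaf whose incident edge is $B_{1,k}$ is already the root. Hence every non-root leaf edge must satisfy $j = i+1$ with $1 \le i \le k-1$, giving the claim. The subtlest point I anticipate is justifying the clean three-arc picture of $\partial N$ in the presence of degeneracies such as three or more regions meeting at a single point (which the construction already absorbs via extra trivial vertices of $T$) or $P_i$ and $P_j$ touching $N$ along several connected arcs; in each such case, however, the Jordan curve theorem combined with the leafness of $v$ collapses the picture back to the two-region, two-terminal boundary structure above, so the cyclic-order argument goes through unchanged.
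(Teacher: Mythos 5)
Your proposal is correct and takes essentially the same route as the paper, which likewise deduces from Observation~\ref{obs:region-path-T}, the correspondence of $T$'s vertices to non-regions, and the connectivity of $\mathcal{R}$ that a leaf's incident border joins regions whose terminals straddle a single terminal-free arc of $\partial G$ (so $j=i+1$ or $\{i,j\}=\{1,k\}$), and then disposes of the wraparound pair by renumbering so that it is the root leaf's edge. You merely spell out the boundary-tracing step around the non-region that the paper leaves implicit.
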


We consider the borders to be both paths in $G$ and edges in $T$. In $T$ we orient the borders toward the root.  In $G$, this gives a well defined start $a_{i,j}$ and endpoint $b_{i,j}$ of the corresponding path $B_{i,j}$ (note that $a_{i,j} = b_{i,j}$ is possible).  By our choice of terminal numbering and orientation of the edges of $T$, from $s_i$ to $t_i$ along $\spath{i}$, $b_{i,j}$ is visited before $a_{i,j}$, and from $s_j$ to $t_j$ along $\spath{j}$, $a_{i,j}$ is visited before $b_{i,j}$.

\subsection{Dynamic programming table}
We populate a dynamic programming table $D_{i,j}$ for each border $\border{i}{j}$.  $D_{i,j}$ is indexed by two vertices $x$ and $y$: $x$ is a vertex of $\spath{i}[b_{i,j},t_i]$  and $y$ is a vertex of $\spath{j}[s_j,b_{i,j}]$.  $D_{i,j}[x,y]$ is defined to be the minimum length of a set of vertex-disjoint paths that connect:
  \begin{center}
    $x$ to $t_i$, $s_j$ to $y$, and $s_h$ to $t_h$ for every $h = i+1,
    \ldots, j-1$
  \end{center}
These paths are illustrated in Figure~\ref{fig:dp-table}.
We interpret $y$ as the last vertex of $\spath{j}[s_j,b_{i,j}]$ that is used in this sub-solution and we interpret $x$ as the first vertex of $\spath{i}[b_{i,j},t_i]$ that can be used in this sub-solution (or, more intuitively, the last vertex of the reverse of $\spath{i}[b_{i,j},t_i]$).  By Lemma~\ref{lem:sp_in_region}, each of the paths defining $D_{i,j}[x,y]$ are contained by their respective region.
\begin{figure}[ht]
  \centering
    \includegraphics[height=1.4in]{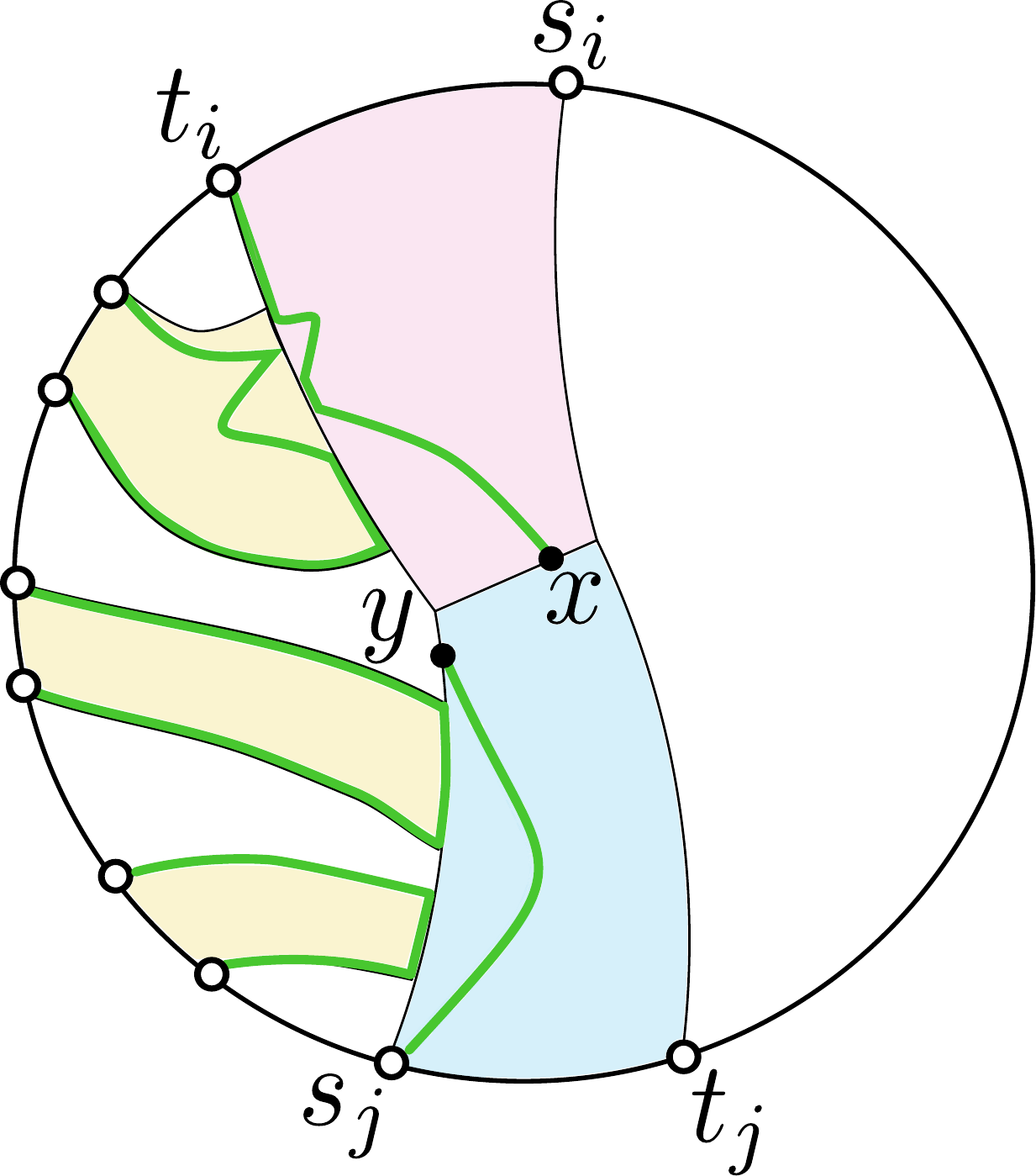}
  \caption{An illustration of the paths defined by $D_{ij}[x,y]$.}
  \label{fig:dp-table}
\end{figure}

\paragraph{Optimal solution}  Given $D_{1,k}$, we can compute the value of the optimal solution.  By Lemma~\ref{lem:alpha_opt}, $\opt{1}$ and $\opt{k}$ contain a shortest (possibly trivial) path from $s_1$ to a vertex $x$ on $\spath{1}$, and from a vertex $y$ on $\spath{k}$ to $t_k$, respectively.  Let $y$ be the last vertex of $\spath{k}[s_k,b_{1,k}]$ that $\opt{k}$ contains and let $x$ be the first vertex of $P_1[b_{1,k},t_1]$ that $\opt{1}$ contains.  Then, by Lemma~\ref{lem:alpha_opt}, the optimal solution has length $D_{1,k}[x,y] + \len{\alpha(s_1,x)} + \len{\alpha(y,t_k)}$ where $\alpha$ is a Type~I path between the given vertices.  The optimal solution can be computed in $O(n^2)$ time % quadratic in the size of $\region{1} \cup \region{k}$ 
by enumerating over all choices of $x$ and $y$.  Computing all such Type~I paths takes $O(n^2)$ since there are %only a linear number of 
$O(n)$ such paths to compute, and each path can be found using the linear-time shortest paths algorithm for planar graphs~\cite{henzinger1997planarShortestPaths}.

\paragraph{Base case: Leaf edges of $\mathbf T$}  Consider a non-root leaf edge of $T$, which, by Observation~\ref{obs:leaf-edge}, is $B_{i,i+1}$ for some $i$.  Then $D_{i,i+1}[x,y]$ is the length of minimum vertex disjoint $x$-to-$t_i$ and $s_{i+1}$-to-$y$ paths in $\region{i} \cup \region{i+1}$.
By Lemma~\ref{lem:2minSumplus}, $D_{i,i+1}[x,y]$ can be computed in $O(n)$ time  for any $x$ and $y$ and so $D_{i,i+1}$ can be populated in $O(n^2)$ time. %quadratic in the size of $\region{i} \cup \region{i+1}$.

\subsection{Non-base case of the dynamic program}

Consider a border $\border{i}{j}$ and consider the edges of $T$ that are children of $\border{i}{j}$.  These edges considered counter-clockwise around their common node of $T$ correspond to borders $\border{i_1}{j_1},\border{i_2}{j_2},\ldots,\border{i_t}{j_t}$ where $i \leq i_1 \leq j_1 \leq \cdots \leq i_t \leq j_t \leq j$.  For simplicity of notation, we additionally let $j_0 = i$ and $i_{t+1} = j$.
Then, by Observation~\ref{obs:first-last-border}, either $j_\ell = i_{\ell+1}$ or $\border{i_\ell}{j_\ell}$ is the last border on $\spath{j_\ell}$ and $\border{i_{\ell+1}}{j_{\ell+1}}$ is the first border on $\spath{i_{\ell+1}}$ for $\ell = 0,\ldots,t$.

\paragraph{An acyclic graph $H$ to piece together sub-solutions}
To populate $D_{i,j}$ we create a directed acyclic graph $H$ with sources corresponding to vertices of $\spath{i}[b_{i,j},t_i]$ and sinks corresponding to $\spath{j}[s_j, b_{i,j}]$.  A source-to-sink ($u$-to-$v$) path in $H$ will correspond one-to-one with vertex disjoint paths from:
\begin{center}
  $u$ to $t_i$, $s_j$ to $v$, and $s_{h}$ to $t_{h}$ for every $h = i+1,
  \ldots, j-1$
\end{center}
Here $u$ and $v$ do not correspond to the vertices $x$ and $y$ that index $D_{i,j}$; to these vertex disjoint paths, we will need to append vertex disjoint $x$-to-$u$ and $v$-to-$y$ paths (which can be found using a minimum cost flow computation by Lemma~\ref{lem:2minSumplus}).

The arcs of $H$ are of two types: (a) Type~I arcs and (b) sub-problem arcs.  Directed paths in $H$ alternate between these two types of arcs.  The Type~I arcs correspond to Type~I paths and the endpoints of the Type~I arcs correspond to the endpoints of the Type~I paths.
Sub-problem arcs correspond to the sub-solutions from the dynamic programming table and the endpoints of the sub-problem arcs correspond to the indices of the dynamic programming table (and so are the endpoints of the {\em incomplete} paths represented by the table).  Note that vertices of a border may appear as either the first or second index to the dynamic programming table; in $H$, two copies of the border vertices are included so the endpoints of the resulting sub-solution arcs are distinct.  Formally:
\begin{itemize}
\item Type~I arcs go from vertices of $\spath{j_\ell}$ to vertices of $\spath{i_{\ell+1}}$ for $\ell = 0,\ldots,t$.  Consider regions $R_{j_\ell}$ and $R_{i_{\ell+1}}$.  There are two cases depending on whether or not $R_{j_\ell} = R_{i_{\ell+1}}$.
  \begin{itemize}
  \item If $R_{j_\ell} = R_{i_{\ell+1}}$, then for every vertex $x$ of $\spath{j_\ell}[s_{j_\ell},b_{i_\ell,j_\ell}]$ and every vertex $y$ of $\spath{j_\ell}[b_{i_{\ell+1},j_{\ell+1}},t_{j_\ell}]$, we define a Type~I arc from $x$ to $y$ with length equal to the length of the $x$-to-$y$ Type~I path.
  \item If $R_{j_\ell} \ne R_{i_{\ell+1}}$, then for every vertex $x$ of $\spath{j_\ell}[s_{j_\ell},b_{i_\ell,j_\ell}]$ and every vertex $y$ of $\spath{i_{\ell+1}}[b_{i_{\ell+1},j_{\ell+1}},t_{j_\ell}]$, we define a Type~I arc from $x$ to $y$ with length equal to the sum of the lengths of the $x$-to-$t_{j_\ell}$ and $s_{i_{\ell+1}}$-to-$y$ Type~I paths.
  \end{itemize}
\item Sub-problem arcs go from vertices of $\spath{i_\ell}$ to vertices of $\spath{j_\ell}$ for $\ell = 1, \ldots, t$. For every $\ell = 1, \ldots, t$ and every vertex $x$ of $\spath{i_\ell}$ and vertex $y$ of $\spath{j_\ell}$ (that are not duplicates of each other), we define a sub-problem arc from $x$ to $y$ with length equal to $D_{i_\ell,j_\ell}[x,y]$.
\end{itemize}
% Consider some $\ell = 0, \ldots, t-1$.  Let $x_{2\ell}$ be a vertex of $\spath{h_{2\ell}}[b_{h_{2\ell},h_{2\ell+1}},t_{h_{2\ell}}]$ and let $y_{2\ell}$ be a vertex of $P_{h_{2\ell+1}}[s_{h_{2\ell+1}},b_{h_{2\ell},h_{2\ell+1}}]$.  The vertices of $H$ are all possible choices of $x_{2\ell}$ and $y_{2\ell}$ over all $\ell$.  Notice that a vertex of $\border{i}{j}$ can appear as distinct $x_{2\ell}$ and $y_{2\ell}$ vertices of $H$.  In $H$ we define three types of directed edges:
% \begin{itemize}
% \item $y_{2\ell} x_{2\ell+2}$ with length equal to the length of the Type~I $(y_{2\ell},x_{2\ell+2})$-path in $G$ if $h_{2\ell+1} = h_{2\ell+2}$,
% \item $y_{2\ell} x_{2\ell+2}$ with length equal to the total length of the Type~I $(y_{2\ell},t_{h_{2\ell}})$-path and Type~I $(s_{h_{2\ell+2}},x_{2\ell+2})$-path in $G$ if $h_{2\ell+1} \neq h_{2\ell+2}$, and
% \item $x_{2\ell} y_{2\ell}$ with length equal to $D_{h_{2\ell},h_{2\ell+1}}[x_{2\ell},y_{2\ell}]$.
% \end{itemize}
% Clearly the resulting graph $H$ is acyclic since edges only go from $x_{2\ell}$ to $y_{2\ell}$ and $y_{2\ell}$ to $x_{2\ell+2}$.  The sources in $H$ are the choices for $x_0$ and the sinks in $H$ are the choices for $y_{2t-1}$. 

\paragraph{Shortest paths in $H$}
By construction of $H$ and the definition of $D_{i_\ell,j_\ell}$, for a source $u$ and sink $v$, we have:
\begin{observation}
  There is a $u$-to-$v$ path in $H$ with length $L$ if and only if there are vertex disjoint paths of total length $L$ from $u$ to $t_i$, $s_j$ to $v$, and $s_{h}$ to $t_{h}$ for every $h = i+1,
  \ldots, j-1$.
\end{observation}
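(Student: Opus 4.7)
The plan is to prove both directions of the biconditional directly from the construction of $H$, treating the correctness of the child tables $D_{i_\ell,j_\ell}$ as given (which follows by induction up $T$, with the leaf base case handled by Lemma~\ref{lem:2minSumplus}). Throughout, I use the fact that a source-to-sink path in $H$ is forced to alternate Type~I arcs and sub-problem arcs, starting and ending with Type~I arcs at the source $u \in \spath{i}$ and sink $v \in \spath{j}$, since vertices of $\spath{j_\ell}$ only emit Type~I arcs and vertices of $\spath{i_\ell}$ only emit sub-problem arcs (for $\ell = 1,\ldots,t$).

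For the forward direction, take a $u$-to-$v$ path $\pi$ in $H$ of length $L$, cut it into its arcs, and expand each arc into the corresponding actual walks in $G$: each Type~I arc expands via Lemma~\ref{lem:alpha_opt} into one or two Type~I paths in $R_{j_\ell}$ (and $R_{i_{\ell+1}}$ if distinct); each sub-problem arc $(x',y')$ on $B_{i_\ell,j_\ell}$ expands, by the definition of $D_{i_\ell,j_\ell}$, into vertex-disjoint paths from $x'$ to $t_{i_\ell}$, $s_{j_\ell}$ to $y'$, and from $s_h$ to $t_h$ for each $h$ enclosed by the subtree rooted at $B_{i_\ell,j_\ell}$. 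Concatenating the Type~I segments with the first two paths of each consecutive sub-solution yields the $u$-to-$t_i$ and $s_j$-to-$v$ paths, and the union of all enclosed $s_h$-to-$t_h$ sub-paths covers $h = i+1,\ldots,j-1$. Vertex-disjointness rests on three facts: (i) by Lemma~\ref{lem:sp_in_region} and Observation~\ref{obs:region-path-T}, sub-solutions belonging to different children live in vertex-disjoint sub-regions; (ii) Type~I pieces are internally disjoint from all borders, so they touch sub-solutions only at the common indexing vertices on $\spath{j_\ell}$ or $\spath{i_{\ell+1}}$; and (iii) the duplication of border vertices in $H$ (one copy per $D$-index) prevents a border vertex from being reused simultaneously as the head of a Type~I arc and the tail of a sub-problem arc.

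For the reverse direction, given vertex-disjoint paths of total length $L$ satisfying the claim, decompose the $u$-to-$t_i$ and $s_j$-to-$v$ paths at their first and last vertices on each child border $B_{i_\ell,j_\ell}$. The segments lying outside every child subtree are forced, by Lemma~\ref{lem:alpha_opt}, to be Type~I paths of exactly the lengths assigned to the corresponding Type~I arcs of $H$. The segments entering the subtree of $B_{i_\ell,j_\ell}$, together with the $s_h$-to-$t_h$ paths enclosed there, form a feasible witness for $D_{i_\ell,j_\ell}[x',y']$ at the decomposition vertices $x',y'$; hence the matching sub-problem arc in $H$ has length at most that witness. Lemmas~\ref{lem:beta_opt0}~and~\ref{lem:beta_opt} ensure that any Type~II activity on the shared border itself is correctly charged to the sub-problem's optimum, without double counting against the adjoining Type~I pieces. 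Stitching these arcs produces a $u$-to-$v$ path in $H$ of length at most $L$; combined with the forward direction, this forces equality.

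The main obstacle will be the careful accounting around each shared child border, where a Type~II interval used inside the sub-problem may abut Type~I pieces approaching the border from both $R_{j_\ell}$ and $R_{i_{\ell+1}}$. Verifying that the border-vertex duplication in $H$ and the two index conventions of $D$ together admit exactly one consistent decomposition (and that no alternative uses of a border vertex slip through unaccounted for) is the only genuinely subtle step; everything else is direct unwinding of the construction together with the structural lemmas of Section~3.
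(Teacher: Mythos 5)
Your argument is correct and matches the paper's intent exactly: the paper offers no explicit proof, asserting the observation ``by construction of $H$ and the definition of $D_{i_\ell,j_\ell}$,'' and your two directions are precisely the unwinding of that construction (expanding arcs into Type~I paths and sub-solutions, and conversely decomposing a witness at the child borders), with the genuinely delicate points --- border-vertex duplication and the alternation of arc types --- correctly identified. The only caveat is that the ``only if'' direction, read literally with exact length $L$, should be (as you in effect prove) the statement that the minimum over $u$-to-$v$ paths in $H$ equals the minimum total length of such disjoint path systems, since a non-optimal witness need not correspond to an $H$-path of exactly its own length.
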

See Figure~\ref{fig:one-to-one} for an illustration of the paths in $G$ that correspond to a source-to-sink path in $H$. Let $H[u,v]$ denote the shortest $u$-to-$v$ path in $H$ (for a source $u$ and a sink $v$).  We will need to compute $H[u,v]$ for every pair of sources and sinks.  Since every vertex in $G$ appears at most twice in $H$, the size of $H$ is $O(n^2)$ and for a given sink and for all sources, the shortest source-to-sink paths can be found in time linear in the size of $H$ using dynamic programming.  Repeating for all sinks results in an $O(n^3)$ running time to compute $H[u,v]$ for every pair of sources and sinks.\footnote{Computing the length of the Type~I paths is dominated by $O(n^3)$, but can be improved to $O(n\log n)$ time by running Klein's boundary shortest path algorithm~\cite{klein2005multiSourceShPaths} in all regions, resulting in an $O(n^2)$ time to construct $H$.}

\begin{figure}[ht]
  \centering
  \includegraphics[height=1.7in]{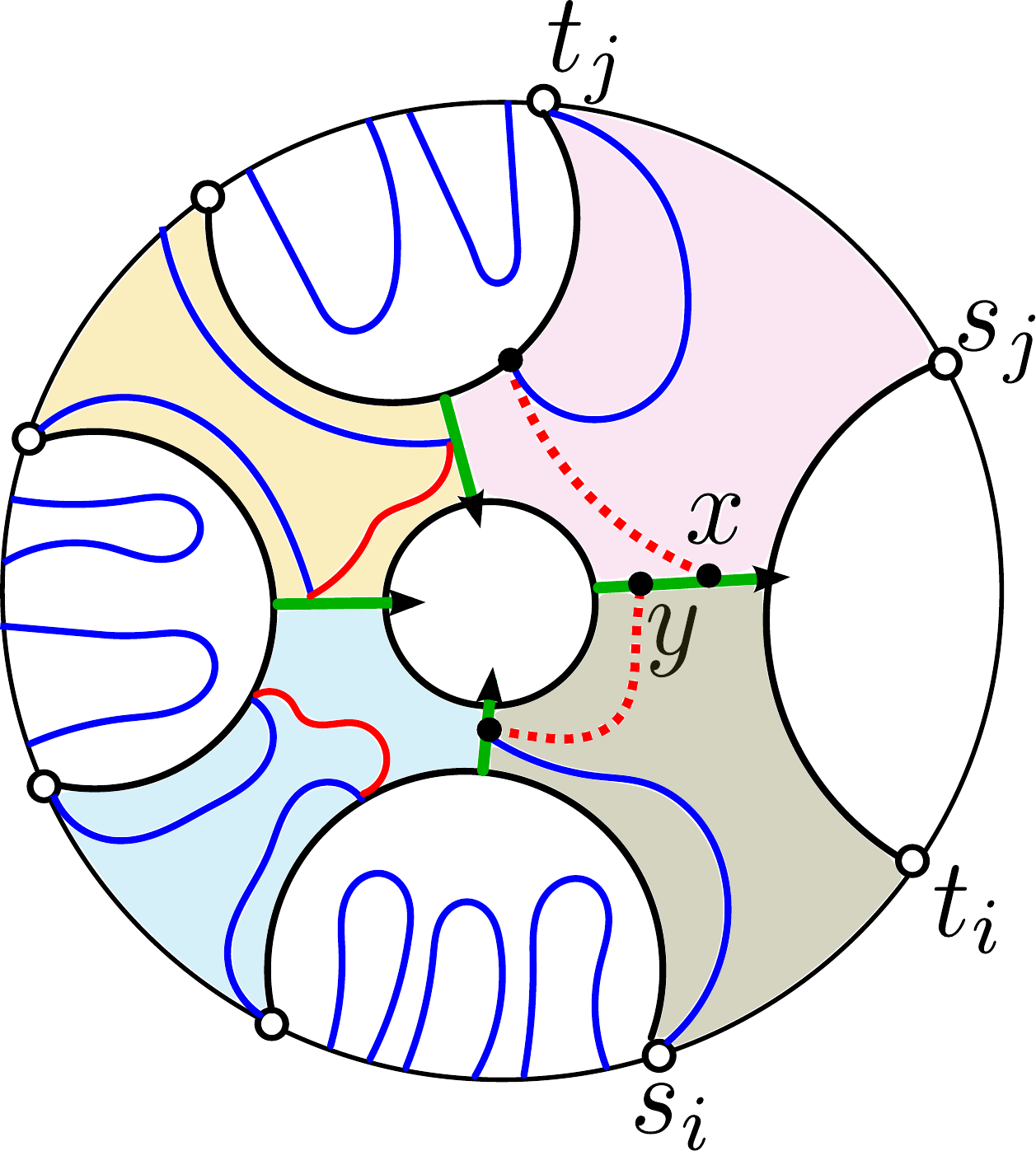}
  \caption{Mutually disjoint walks represented by a directed path in $H$ for a set of incident borders (green).  The blue arcs correspond to the walks represented in sub-problems and the solid red paths correspond to the Type~I paths represented by Type~I arcs.  The dotted red paths represent vertex disjoint $u$-to-$x$ and $v$-to-$y$ paths that will be added via a min-cost flow computation.}
\label{fig:one-to-one}
\end{figure}

\paragraph{Handling vertices that appear in more than two regions}  As indicated, a vertex $c$ may appear in more than two regions; this occurs when two or more borders share an endpoint.  In the construction above, if $c$ appears in only two regions, then, $c$ can only be used as the endpoint of two sub-paths (whose endpoints meet to form a part of an $s_i$-to-$t_i$ path in the global solution).  However, suppose for example that $c$ appears as an endpoint of both $\border{i}{j}$ and $\border{i'}{j'}$ and so 4 copies of $c$ are included in $H$ (two copies for each of these borders).  On the other hand, one need only {\em guess} which $s_i$-to-$t_i$ path $c$ should belong to first and construct $H$ accordingly.  There are only $k$ possibilities to try.

Unfortunately, there may be $O(k)$ shared vertices among the borders \\ $\border{i_1}{j_2},\border{i_2}{j_2},\ldots,\border{i_t}{j_t}$ involved in populating $D_{i,j}$.
  It seems that for each of these $O(k)$ shared vertices, one would need to guess which $s_i$-to-$t_i$ path it belongs to, resulting in an exponential dependence on $k$. 

Here we recall the structure of $T$: the nodes of $T$ correspond to {\em non-regions}: disks (or points) surrounded by regions.  If there are several shared vertices among the borders, then there is an order of these vertices around the boundary of the non-region.  That is, for a vertex shared by a set of borders, these borders must be contiguous subsets of $\border{i_1}{j_2},\border{i_2}{j_2},\ldots,\border{i_t}{j_t}$.  In terms of the construction of $H$, there is a contiguous set of levels that a given shared vertex appears in and distinct shared vertices participate in non-overlapping sets of levels.  For one set of these levels, we can create different copies of the corresponding section of $H$.  In each copy we modify the directed graph to reflect which $s_i$-to-$t_i$ path the corresponding shared vertex may belong to (see Figure~\ref{fig:copies}).  As we have argued, since distinct shared vertices participate in non-overlapping sets of levels, this may safely be repeated for every shared vertex.  The resulting graph has size $O(kn^2)$ since there are $O(k)$ borders and shared vertices are shared by borders.  The resulting running time for computing all source-to-sink shortest paths in the resulting graph is then $O(kn^3)$.

\begin{figure}[ht]
  \centering
  \includegraphics[height=4in]{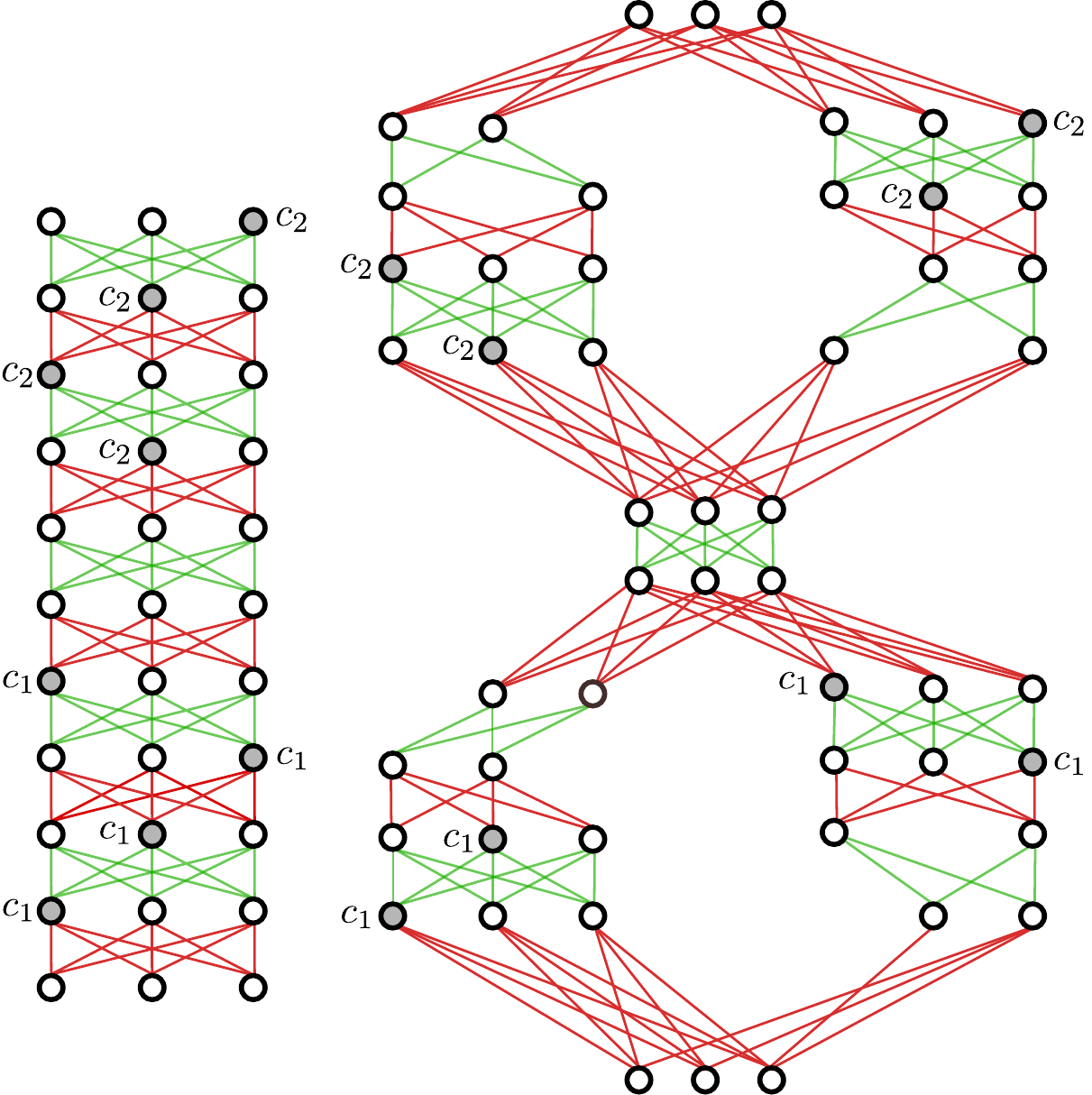}
  %\cNote{TODO: a figure illustrating the construction of $H$ for multiple vertex copies}
  \caption{(left) $H$ constructed without handling the fact that vertices $c_1$ and $c_2$ (black) may appear more than twice.  The arcs are all directed upwards (arrows are not shown); green arcs are sub-problem arcs and red arcs are Type~I arcs. (right) The levels that $c_1$ appears in are duplicated, and only one pair of copies of $c$ is kept in each copy of these levels. The vertex $c_1$ may only be visited twice now on a source to sink path.}
\label{fig:copies}
\end{figure}

\paragraph{Computing $D_{i,j}$ from $H$}
To compute $D_{i,j}[x,y]$, we consider all possible $u$ on $\spath{i}[x,t_i]$ and $v$ on $\spath{j}[s_j,y]$ and compute the minimum-length vertex disjoint $u$-to-$x$ path and $v$-to-$y$ path that only use vertices that are interior to $\region{i} \cup \region{j}$ (that is vertices of $\border{i}{j}$ may be used); by Lemma~\ref{lem:2minSumplus}, these paths can be computed in linear time.  Let $M[u,v]$ be the cost of these paths.  Then
\[
D_{i,j}[x,y] = \min_{u \in \spath{i}[x,t_i],\, v \in \spath{j}[s_j,y]} M[u,v]+H[u,v].
\]
As there are $O(n^2)$ choices for $u$ and $v$ and $M[u,v]$ can be computed in linear time, $D_{i,j}[x,y]$ can be computed in $O(n^3)$ time given that distances in $H$ have been computed.

\paragraph{Overall running time}  For each border $\border{i}{j}$, $H$ is constructed and shortest source-to-sink paths are computed in $O(k n^3)$ time.  For each $x,y \in \border{i}{j}$, $D_{i,j}[x,y]$ is computed in $O(n^3)$ time.  Since there are $O(n^2)$ pairs of vertices in $\border{i}{j}$, $D_{i,j}$ is computed in $O(n^5)$ time (dominating the time to construct and compute shortest paths in $H$).  Since there are $O(k)$ borders (Lemma~\ref{lem:bounded_number_borders}),  the overall time for the dynamic program is $O(kn^5)$.

\paragraph{Acknowledgements}  This material is based upon work supported by the National Science Foundation under Grant No.\ CCF-1252833.

\bibliographystyle{alpha}
\bibliography{dispaths}

\end{document}